\documentclass[pra,twocolumn,superscriptaddress,showpacs]{revtex4-1}

\usepackage{amsmath}
\usepackage{latexsym}
\usepackage{amssymb}
\usepackage{graphicx}
\usepackage[colorlinks=true, citecolor=blue, urlcolor=blue]{hyperref}
\usepackage{float}
\usepackage{amsfonts}
\usepackage{textcomp}
\usepackage{mathpazo}
\usepackage{enumitem}
\usepackage{comment}

\newcommand{\ket}[1]{| #1 \rangle}
\newcommand{\bra}[1]{\langle #1 |}

%\newcommand{\cosec}{\operatorname{cosec}}
%\newcommand{\apo}{\textsc{\char13}}

%%%%%%%%%%%%%%%%%%%%%%%%%%%%%%%%%%%%%%%%%%%%%%%%%%
% package from appnedix 

%\usepackage{chngcntr}
%\usepackage{fullpage,amsthm,color}
%\usepackage{titlesec}
%\usepackage{palatino}
\usepackage{bbm}

% clickable references etc.

\usepackage{hyperref}

\sloppy

\usepackage[draft]{fixme}
\usepackage{amsmath,bbm}%,mathrsfs}
\usepackage{graphicx}
\usepackage{amsfonts}
\usepackage{amssymb}
\usepackage{amsmath, amssymb, amsthm,verbatim,graphicx,bbm}
\usepackage{mathrsfs}
\usepackage{color,xcolor,longtable}
\usepackage[rightcaption]{sidecap}

%\linespread{1.05}        % Palatino needs more leading (space between lines)
%\usepackage[scaled]{helvet} % ss
%\usepackage{courier} % tt
%\normalfont
%\usepackage[T1]{fontenc}

%\usepackage{showkeys}
%\usetikzlibrary{patterns}
%\input{komendy_osid.tex}
%\newcommand{\bei}{\begin{itemize}}
%\newcommand{\eei}{\end{itemize}}
\newcommand{\beq}[0]{\begin{equation}}
\newcommand{\eeq}[0]{\end{equation}}

\def\ra{\rangle}
\def\la{\langle}

\def\be{\begin{equation}}
\def\ee{\end{equation}}
\def\ben{\begin{eqnarray}}
\def\een{\end{eqnarray}}
\def\eea{\end{array}}
\def\bea{\begin{array}}

\newcommand{\Tr}[1]{\mathrm{Tr}#1}
\newcommand{\bei}{\begin{itemize}}
\newcommand{\eei}{\end{itemize}}
\def\ra{\rangle}
\def\la{\langle}

\newcommand{\I}{\mathbbm{1}}

\renewcommand{\emph}[1]{\textbf{#1}}

\makeatletter
\newtheorem*{rep@theorem}{\rep@title}
\newcommand{\newreptheorem}[2]{%
\newenvironment{rep#1}[1]{%
 \def\rep@title{#2 \ref{##1}}%
 \begin{rep@theorem}}%
 {\end{rep@theorem}}}
\makeatother

% Theorem Environments
\theoremstyle{plain}
\newtheorem{theorem}{Result}%[section]
\newreptheorem{theorem}{Result}
\newtheorem*{thm*}{Result}

\newtheorem{defn}{Definition}

\theoremstyle{definition}

\theoremstyle{remark}

%\numberwithin{equation}{section}

\usepackage[T1]{fontenc}

%%%%%%%%%%%%%%%%%%%%%%%%%%%%%%%%%%%%%%%%%%%%%

\begin{document}
\title{Demonstration of quantum correlations that are incompatible with absoluteness of measurement }
\author{Shubhayan Sarkar}
\affiliation{Center for Theoretical Physics, Polish Academy of Sciences, Aleja Lotnikow 32/46, 02-668 Warsaw, Poland }
\author{Debashis Saha}
\affiliation{S. N. Bose National Centre for Basic Sciences,
Block JD, Sector III, Salt Lake, Kolkata 700098, India }
\affiliation{School of Physics, Indian Institute of Science Education and Research Thiruvananthapuram, Kerala 695551, India }

\begin{abstract}
Exploiting the tension between the two dynamics of quantum theory (QT) in the Wigner's Friend thought experiment, we point out that the standard QT leads to inconsistency in observed probabilities of measurement outcomes between two super-observers - Wigner and his Student. To avoid such inconsistent predictions of QT, we hypothesize two distinct perspectives.
%inspired by the perspectives of Wigner and Student, which we term as 
  The first one is "Absoluteness of measurement (AoM)," that is, any measurement process is an absolute event irrespective of other observers and yields a single outcome. The other is "Non-absoluteness of measurement (NoM)" as the negation of AoM.    We introduce an operational approach, first with one friend and then with two spatially separated friends, to test the validity of these two perceptions in quantum theory without assuming the details of the experiment. First, we show that the set of probabilities obtainable for NoM is strictly larger than the set obtainable for AoM.  We provide the simplest scenario so far, involving a single quantum preparation and one unitary operation by a super-observer that can demonstrate correlations incompatible with AoM.  Remarkably, in the scenario with spatially separated observers, we present a strict hierarchy among the sets of probabilities observed in the following three theories: classical or local realist, quantum theory with AoM, and quantum theory with NoM.  
  %Finally, we characterize different interpretations of QT based on the perceptions of AoM and NoM.% and whether they lead to consistent predictions or not.
\end{abstract}

\maketitle
\section{Introduction}
Quantum theory (QT) has been one of the most successful theories describing nature. Even with the counter-intuitive predictions, QT has always been validated by experiments till date, not just for microscopic systems, but even large macroscopic states \cite{mole2,mole3,mole4,mole5,mole6}, complex molecules \cite{mole7, molecule} and living systems \cite{vir}. Despite this, a long sought-after question in physics is whether there exists some scale above which QT is not valid. From a theoretical perspective, the postulates of QT are valid for any system. As a consequence, QT, in general, can be regarded to be universally valid unless we find some experiment that refutes the predictions of QT at some scale.

The universal validity of QT has far-reaching consequences, as was first observed by Schrodinger in 1935 \cite{Schrod}. Famously known as Schrodinger's cat paradox, the thought experiment involves putting a cat into a state of superposition of two different macroscopic states, dead or alive. The existence of such a state, even if classically incomprehensible, does not contradict quantum laws of physics. However, such a scenario leads to some discrepancies in quantum theory, which is known as the "measurement problem" of QT. 
%When the cat is observed or measured, the postulates of QT tell us that the cat would be either dead or alive, this is the so-called measurement problem in QT. 
The issue was differently illustrated by Wigner in 1967 \cite{Wigner}  using a thought experiment involving two different observers who give two different descriptions of the same physical process of measuring a superposed quantum state. The first observer, named Friend, measures a quantum system, and the second observer, named Wigner, describes Friend along with the experiment he performs, that is, Friend's laboratory. The discrepancy inherently lies in the fact that QT allows Wigner to describe Friend's measurement as a reversible or unitary process, while according to Friend, the measurement changes the quantum state irreversibly. However, 
such a discrepancy can be avoided if we do not associate quantum state with any underlying reality of physical systems and take the standpoint that quantum state only represents some knowledge about physical systems. This thought experiment is commonly referred to as the Wigner's Friend (WF) scenario and is rigorously dealt with later.

One can comprehend that the discrepancy arises in the WF scenario because of the presence of two different dynamics within QT, one when systems are evolving with time and the other when systems are measured. Even when such a discrepancy, or the so-called measurement problem, itself is counter-intuitive, it does not give rise to any disagreement at the empirical level. Moreover, the "measurement problem" is a problem for the orthodox interpretation of QT, wherein the quantum state is considered to be the realist description of a physical system. This leads to a significant question of whether the tension between the two dynamics of QT results in any observable inconsistency. In other words, what would be the operational version of the inconsistency arising in the WF thought experiment? If any inconsistency persists at the empirical level, it would be particularly interesting to find different solutions that can resolve those inconsistencies and probe whether those solutions can be distinguished operationally.

In this work, we first revisit the Wigner's Friend scenario, which gives rise to inconsistency at the empirical level due to the two different dynamics of QT. To resolve this inconsistency, we propose two hypotheses, namely "Absoluteness of Measurement (AoM)," inspired by the assumption of "Absoluteness of observed events" \cite{Wiseman} and the negation of it termed here as "Non-Absoluteness of Measurement (NoM)." To test the validity of these two hypotheses operationally, we generalize the WF thought experiment with the Friend being able to choose different measurements based on different inputs along with Wigner being able to perform different unitaries on Friend's laboratory.  We show that the set of probabilities obtainable for AoM is a proper subset of the probabilities obtained for NoM. In order to do so, we find the simplest scenario involving a single quantum preparation and one unitary operation by a super-observer in which quantum predictions with NoM (or quantum predictions in unitary quantum theory) are incompatible with AoM. Compared to the previous works \cite{Renner, Brukner, Wiseman}, our scenario is minimal as only two observers are required to observe a violation. Further on, we do not require the spatially separated observers or entanglement to detect a violation of AoM, making our scheme more relevant from a practical perspective. 

To detect the difference between QT with AoM and NoM, we propose inequalities satisfied by the quantum predictions with AoM, and their violation persists even if one accounts for non-idealness in the Wigner's Friend setup. In fact, there is a considerable gap between the bound obtained using AoM and NoM, which can be observed in experiments. Remarkably, our generalization for bipartite systems presents a strict hierarchy among the correlation sets observed in the following three theories: classical or local realist, quantum with AoM, and quantum with NoM or unitary quantum. This brings forth deeper insight into the quantum correlations in the Wigner's Friend setup. 

\section{Revisiting Wigner's Friend}\label{sec2}

We first revisit the WF experiment wherein the tension between two different dynamics of QT leads to inconsistency at the empirical level. Subsequently, we propose two perspectives for a consistent description when applying QT.

\subsection{Wigner's Friend and Student thought experiment}\label{sec:wfs}

The modified Wigner's Friend and Student (WFS) scenario \cite{Wigner} can be described as follows: Friend, who is confined in an isolated laboratory, receives a quantum system $(Q_s)$, performs a measurement on it, and subsequently, observes a definite outcome. Let us term the isolated laboratory, including Friend, measurement device, and the environment inside it, as "Lab". Wigner and his Student, on the other hand, describe the physical process of the combined system of Lab and $Q_s$. A logical inconsistency regarding the state of the combined system arises if QT is supposed to be universal, as defined below.
\begin{defn}
[Universal Quantum Theory (UQT)] The standard quantum theory   (prescribed by the following postulates)    is applicable to all physical systems, including macroscopic systems like an observer
\footnote{Note that, here we only refer to the operational theory and we do not presume any existence of realism or ontology.}. 
  
\begin{itemize}
    \item The state space of any isolated physical system is a Hilbert space $\mathcal{H}$. The system is represented by a state vector, that is, a unit vector $|\psi\ra \in \mathcal{H}$.
    \item The time evolution of the closed system is described by unitary transformation. That is, the state of the system in the later time is given by $U|\psi\ra$, where $U$ is a unitary operator acting on $\mathcal{H}$.
    \item Any $d \ (1\leqslant d \leqslant \infty)$ outcome measurement is represented by a set of $d$ projectors $\{P_i\}_{i=1}^d$ such that $\sum_i P_i = \I$, where $\I$ is the identity operator on $\mathcal{H}$. The probability of obtaining outcome $i$ on the physical system (represented by $
    |\psi\ra$) is given by $\Tr(P_i|\psi\ra\!\la\psi|)$.
    If outcome $i$ is obtained after the measurement, then the state of the system changes to $P_i|\psi\ra/\sqrt{\Tr(P_i|\psi\ra\!\la\psi|)}$.
    \item The state space of a composite physical system is given by the tensor product of the respective state spaces of the component physical systems.
\end{itemize}
   
\end{defn}
To reflect on the above fact, let us take an explicit example (see Figure \ref{fig}). The observer Friend receives a quantum state $\ket{\psi}\in\mathbbm{C}^2$ and obtains an outcome after performing Pauli-Z measurement $(\sigma_z)$ on it. The physical interaction between Lab and $Q_s$ is certainly a measurement process with respect to Friend.   However, the postulates of universal QT do not specify which dynamic applies for describing the interaction between Lab and $Q_s$ with respect to a super-observer (Wigner or Student). QT allows Wigner and Student to describe the same physical interaction either by time-evolution dynamics according to the Schrodinger equation or by post-measurement dynamics according to the collapse postulate. 
Say that according to Wigner the combined state of Lab and $Q_s$ evolves unitarily. Now, we know that if the initial state of $Q_s$ is an eigenvector of $\sigma_z$, that is, $\ket{0}$ or $\ket{1}$, then after the interaction Friend observes a definite outcome $+1$ or $-1$ and the state $Q_s$ remains unchanged. This empirical fact implies that the unitary $U$ describing the evolution of the combined system must satisfy the following conditions,
\beq \label{eq1}
 U\ket{0}\ket{f} = \ket{0}\ket{f_+}, \ U\ket{1}\ket{f} = \ket{1}\ket{f_-},
\eeq 
where $\ket{f}$ is the initial state of Lab, and $\ket{f_\pm}$ is the final state of the Lab such that the measurement device shows a definite outcome $\pm1$ and Friend observes a definite outcome $\pm1$. Besides, whenever Friend performs a measurement, she encodes the information that she has obtained a definite outcome on an ancillary system $\ket{\psi}_{an}\in\mathbbm{C}^2$ (this could be a classical system) and keeps outside his/her Lab. Subsequently, the complete descriptions of Lab, $Q_s$, and ancilla with respect to Wigner are
\ben \label{wevo}
 U\ket{0}\ket{f} \otimes \sigma_x \ket{0}_{an} = \ket{0}\ket{f_+} \otimes\ket{1}_{an}, \nonumber  \\ U\ket{1}\ket{f} \otimes \sigma_x \ket{0}_{an} = \ket{1}\ket{f_-} \otimes \ket{1}_{an},
\een 
wherein Friend performs the  unitary operation $\sigma_x$ to encode the information that he/she performed a measurement and observed an outcome.  
Notice that the ancilla does not contain any information about the outcome of the measurement.   
On the other hand, Student describes the evolution of the combined system of Lab and $Q_s$ according to the collapse postulate of UQT as 
\beq 
 \ket{0}\ket{f} \rightarrow \ket{0}\ket{f_+}, \ \ket{1}\ket{f} \rightarrow \ket{1}\ket{f_-},
\eeq 
whereas the ancillary system evolves as
\beq \label{anevo}
\sigma_x \ket{0}_{an} = \ket{1}_{an}.
\eeq

Therefore, if the initial state of $Q_s$ is $(\ket{0}+\ket{1})/\sqrt{2}$, then it follows from \eqref{wevo} that the final state of the combined system of Lab and $Q_s$, according to Wigner is given by,
\beq \label{FSW}
 U\left(\frac{\ket{0}+\ket{1}}{\sqrt{2}}\right)\ket{f}
=  \frac{1}{\sqrt{2}}\left( \ket{0}\ket{f_+} + \ket{1}\ket{f_-} \right) .
\eeq  
While Student,  who does not have knowledge about the outcome of the measurement, describes the following evolution of the combined system of Lab and $Q_s$,  
\begin{eqnarray} \label{FSF}
\left(\frac{\ket{0}+\ket{1}}{\sqrt{2}}\right)&\ket{f}&\rightarrow \nonumber\\ 
\frac{1}{2}\big( &\ket{0}\!\bra{0}&\otimes\ket{f_+}\!\bra{f_+} + \ket{1}\!\bra{1}\otimes\ket{f_-}\!\bra{f_-} \big).
\end{eqnarray}
Additionally, since the ancilla is not correlated with the outcome of the measurement, it evolves the same way as given in \eqref{anevo}. The purpose of the ancillary system is for Friend to communicate outside the Lab that the measurement has been performed and a definite outcome has been observed. Moreover, if Wigner applies an operation on Lab to alter Friend's observation, then, later on, by measuring the ancilla Friend can be sure that she indeed obtained an outcome before. %The purpose of this ancilla will be more relevant in the extended version of the WF experiment proposed in the next section.
To summarize, we note that the two observers Wigner and Student ascribe two different descriptions, given by \eqref{FSW} and \eqref{FSF}, for the same physical system comprising Lab and $Q_s$.

\subsection{Operational Consistency}\label{sec:consistency}

Firstly, we note that according to Wigner the state of the combined system of Lab and $Q_s$ is given by \eqref{FSW}, while Student describes the combined system as given in \eqref{FSF}. As a consequence, the quantum predictions of observed probabilities of measurements on the combined system are inconsistent. For instance, if the following observable is measured
\beq \label{Fm}
\{\ket{F}\!\bra{F}, \I - \ket{F}\!\bra{F} \}
\eeq 
where $\ket{F}=1/\sqrt{2}\left( \ket{0}\ket{f_+} + \ket{1}\ket{f_-} \right)$, the first outcome will be certain as far as Wigner is concerned. On the contrary, Student anticipates that both outcomes are equally likely.
%Although, one may avoid such inconsistency between Wigner and Student by postulating that an observer (in this case, Friend) cannot apply QT to describe himself/herself. After-all, Friend cannot witness or observe any measurement on herself.  However,
The discrepancy at the observable level between Wigner and Student is unavoidable in UQT as it does not prescribe which of the two descriptions is correct. 

%%%%%%%%%%%%%%%%%%%%%%%
  
To resolve this inconsistency, we consider an additional assumption on the `measurement process' so that one can infer a unique state using quantum theory, that is, either \eqref{FSW} or \eqref{FSF}, without any ambiguity. One such assumption can be the absolute nature of an observed measurement event. 
%To avoid such an inconsistency, one can assume that quantum measurement with respect to any observer is an absolute fact so that every observer should infer the state to be \eqref{FSF}. 
%
\begin{defn}[Absoluteness of measurement (AoM)]\label{AOM}
Any measurement performed by some observer yields a single outcome, and that measurement is an absolute event irrespective of other observers and processes.
\end{defn}
This assumption is in the same spirit as considered in Ref. \cite{Wiseman} under the name `Absoluteness of observed events'. However, `Absoluteness of observed events' is theory-independent, while in this work, we will consider AoM along with the validity of the standard quantum theory.  AoM implies that if any observer (e.g., Friend here) performs a measurement on a quantum system, it qualifies as a measurement for every other observer. As a consequence, only the post-measurement dynamic is applicable to describe the system for all observers whenever any one observer observes the measurement outcome. Therefore, quantum theory with AoM unambiguously infers the combined state to be \eqref{FSF}, which  is the description by Student. 
In contrast to AoM, we can suppose the negation of AoM in order to retain consistency. 
% In case of UQT every observer, who doesn't observe the outcome, infers the combined state of Lab and $Q_s$ as \eqref{FSW} via unitary time-evolution.
%

\begin{defn} [Non-absoluteness of measurement (NoM)]\label{AOE}
The necessary condition for applying post-measurement dynamics on a system by an observer is to observe the measurement outcome. In other words, measurement is not an absolute event with respect to an observer unless that observer observes an outcome.
\end{defn}
That is, any observer without the knowledge of the measurement outcome describes any physical process by the time evolution specified by the theory.  Unitary quantum theory, wherein the collapse dynamics is a derived fact instead of a postulate, meets NoM. Unitary quantum theory has been studied in the context of Wigner's Friend experiment \cite{commphy,praArmando}. However, there may exist other interpretations of quantum mechanics that obey NoM. For instance, one may consider NoM along with all the standard postulates, including the collapse postulate, so that altogether it does not exhibit any inconsistency.

In the context of the WFS experiment, NoM simply implies that the measurement on the quantum system performed by Friend is not a measurement process for super-observers unless they
have the knowledge about the outcome of the measurement. Therefore, the correct description of the combined system with respect to any super-observer is \eqref{FSW},  even if Friend describes the quantum system using post-measurement dynamics. Clearly, Wigner represents the perspective of NoM. \\
\begin{widetext}
\onecolumngrid
\begin{figure}[h!]
    \begin{center}
    \includegraphics[width=0.85\textwidth]{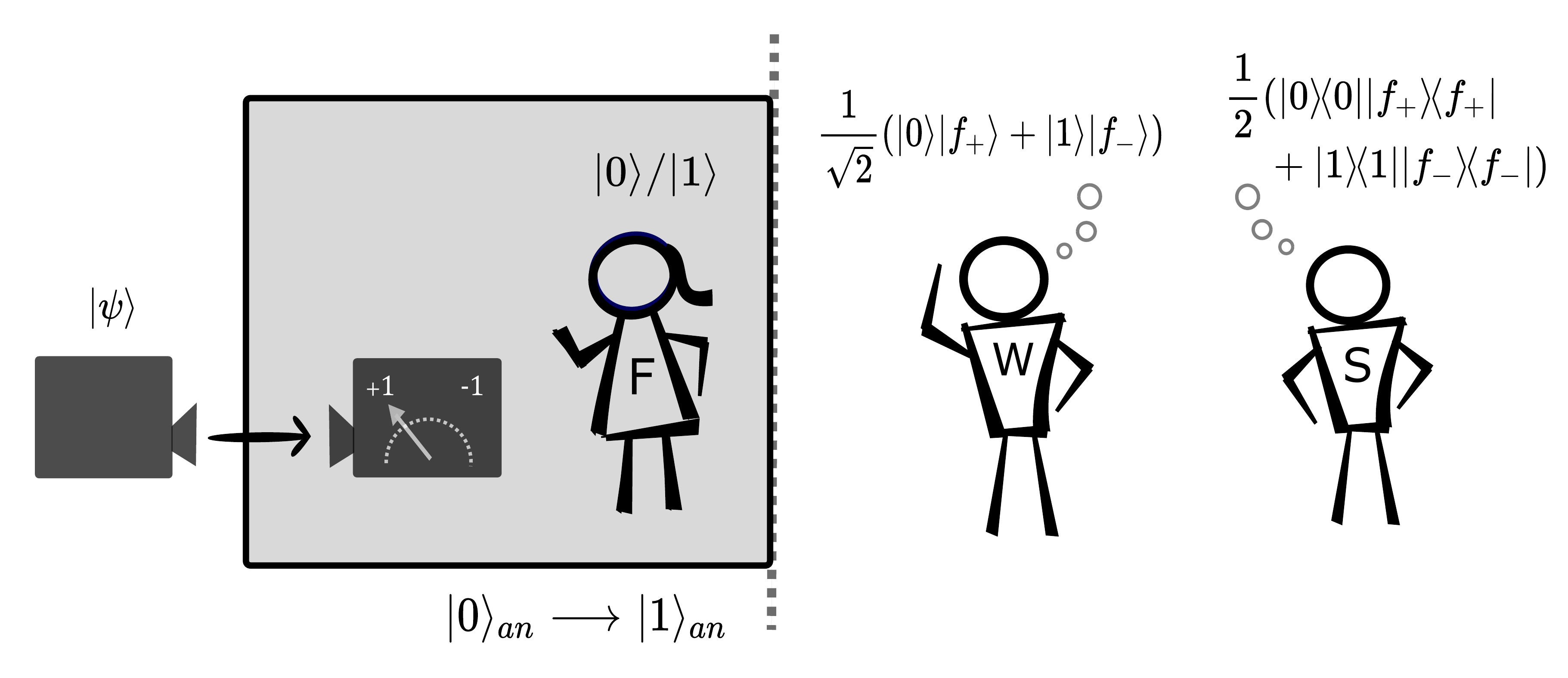} 
    \caption{\textbf{Simplest Wigner's Friend and Student (WFS) scenario.} Friend (F) receives a system from the preparation device on which she performs a measurement. Friend encodes the information about the fact that measurement has been performed by reversing the ancillary bit $\ket{0}_{an}$ to $\ket{1}_{an}$ which can be read by Wigner (W), Student (S) and Friend at any further times. Now, Wigner (or Student) can apply different processes on the Lab and finally perform a measurement on the Lab to test their distinct perspectives.} 
    \label{fig}
       \end{center}
\end{figure}
\end{widetext}

%that are formally defined as follows. 
%Let us remark that these two notions depend on the scope of whether the particular physical system, namely, Friend qualifies as an observer or not. However, we can circumvent this issue by considering Friend to be a classical macroscopic register unambiguously with respect to Wigner or Student or any other observer who employs QT. 

%Let us remark that these two notions depend on the assumption that Friend is an observer and can perform measurements on the quantum system. \\

For any physical theory, in order to avoid an observable discrepancy among these two perspectives, we have two options - either all the observable probabilities should not depend upon whether a measurement is considered to be absolute or not; or the status of a measurement, whether it is absolute or not, is specified by the theory itself. Strictly speaking, any reasonable universal physical theory should satisfy the following notion of consistency. 
\begin{defn}[Operationally Consistent theory for all Observers (OCO)]\label{def:oc}
  A universal theory is `operationally consistent for all observers' if any two observers, after applying the theory correctly, always predict unique probabilities that can be observed by both  of them.
\end{defn}
Any theory with unique dynamics, like classical theory, is always operationally consistent. Within QT, as the quantum state of the combined system of Lab and $Q_s$ is different depending on the perspective, we can conclude that the UQT is operationally inconsistent unless it specifies either AoM or NoM for every measurement. 
%Let us stress here that for one observer even if a particular theory gives correct predictions, it might not be consistent for all observers. For instance, theories in which different observers can not account for each other observations are not OCO, even if they lead to correct predictions for a particular observer.
 %

\section{An operational approach to witness correlations that are incompatible with Absoluteness of measurements}

We address here the following question: Is it possible to empirically test AoM def-\eqref{AOM} or NoM def-\eqref{AOE} without the knowledge of the underlying details of the experiment? In other words, can we discriminate ‘UQT with AoM’ and ‘UQT with NoM’ in an operational way?   So far, we have taken into consideration two super-observers (Wigner and Student) in order to understand the distinct predictions within quantum theory; however, in what follows, we regard one super-observer as sufficient to carry out the empirical tests introduced below. Secondly, for simplicity, we assume that the super-observer is restricted to applying unitary transformations to the combined system before making any measurements. However, the results obtained in the following sections also hold for transformations that can be an arbitrary unital quantum channel. 
Let us first describe the Wigner's Friend setup in full generality. Every run of the experiment consists of the following steps:  
%Let us first remark that if Wigner is allowed to perform an arbitrary operation on the combined system of Lab and $Q_s$, then Wigner can produce any empirical results, irrespective of the description of the combined system given by \eqref{FSW} or \eqref{FSF}. For example, to achieve the statistics of an arbitrary quantum state of the isolated lab and $Q_s$, Wigner can perform a measurement containing the projectors $\{\ket{0}\!\bra{0}\otimes\ket{f_+}\!\bra{f_+}, \ket{1}\!\bra{1}\otimes\ket{f_-}\!\bra{f_-}\}$ and subsequently, applies a suitable unitary depending on the outcome. Therefore, in order to address the above question, Wigner should be restricted to apply certain operations. Let us consider that Wigner can apply only unitary transformation on the combined system. Secondly, Wigner and Student should perform some measurement on the isolated lab such that they can verify the empirical results. Considering these, we introduce an further extension of WFS thought-experiment in which every run of the experiment consists of the following steps: 
%
%
  
\begin{itemize}[leftmargin=\parindent]
    \item At $t_0$, Friend and the super-observer receive some random variable $x\in \{0,\dots,n-1\}$ and $w\in \{0,\dots,m-1\}$, respectively. So, in general $x$ takes $n$ possible values, while $w$ takes $m$ possible values. The initial state of Lab knowing the input $x$ is $\ket{f^x}$.
    After reading $x$, Friend keeps this random variable outside the Lab. 
    %Along with Wigner, Student also knows the variable $w$.
    
    \item At $t_1$, a preparation device prepares a quantum system ($Q_s$) in the state $\ket{\psi}$ which enters the isolated Lab of Friend.
    
    \item At $t_2$, depending on $x$, Friend performs a measurement, denoted by $\overline{A}_x$, on $Q_s$ and obtains an outcome. At this point, Friend encodes the information of observing a definite outcome on the ancilla as mentioned before. The ancilla does not contain information about the outcome as seen by Friend. We consider the measurement $\overline{A}_x$ always to be a $d$-outcome projective measurement defined by the following rank-one projectors
\beq \label{barA}
\overline{A}_x := \left\{\overline{P}^x_i \right\}^{d-1}_{i=0} . 
\eeq   
    Here, we identify Friend's measurement with an over-line. Note that Wigner and Student do not know the value of $x$. But, they know the possible values of $x$ along with the complete description of $\ket{f^x}$ given every $x$. We denote $\ket{f^x_i}$ for the state of the Lab such that the measurement device shows outcome $i$ and Friend observes outcome $i$ after the measurement \eqref{barA}. Note that, $\la f^x_i|f^{x'}_j\ra = \delta_{i,j}\delta_{x,x'}$.
      
    For any rank-one projective measurement $\overline{P}^x_i=|\psi^x_i\ra\!\la\psi^x_i|$ for every $i,x$ and the state $|\psi\ra$ can be expressed as
\beq \label{psixi}
|\psi\ra = \sum_{i=0}^{d-1} \alpha^x_i |\psi^x_i\ra ,
\eeq 
wherein $\forall x, \sum_i |\alpha^x_i|^2 =1$. It follows from the previous discussion that after the measurement is performed, the state of the combined system according to AoM and NoM, respectively, are
\be \label{labsaom}
\sum^{d-1}_{i=0} |\alpha^x_i|^2 \ket{F^x_i}\!\bra{F^x_i} 
\ee 
and 
\be \label{labsnom}
\sum_{i=0}^{d-1} \alpha^x_i |F^x_i\ra ,
\ee 
where 
\beq \label{F}
    |F^x_i\ra = |\psi^x_i\ra |f^x_i\ra .
\eeq

    \item At $t_3$, the super-observer performs a unitary transformation $U_{w}$ on the combined system depending on $w$. Throughout this article, we further assume that for $w=0$, the super-observer does not do anything. The unitaries $U_w$ should be in the form \eqref{Uw}, which will be discussed soon.
    
    \item At $t_4$, the super-observer performs the following $(dn+1)$-outcome measurement on the combined system,
    \beq  \label{A}
    \Omega :=  \left\{\{P^{x}_i\}^{d-1,n-1}_{i=0,x=0} \ , \I - \sum_{i,x}  P^x_i \right\}
    \eeq   
    where
    \beq \label{P}
    P^x_i = \overline{P}^x_i \otimes |f^x_i\ra\!\la f^x_i| 
    \eeq 
    is a projector acting on the combined state of Lab and $Q_s$, $|f^x_i\ra$ is the state of Lab observing the outcome $i$ after measuring $\overline{A}_x$, and $\overline{P}^x_i$ is given in \eqref{barA}. 
    Note that measurement \eqref{A} is in the same basis as Friend's measurement on $Q_S$, along with measuring the input $x$. 
    The dimension of the macroscopic system Lab is presumably larger than $d\cdot n$, and thus, an auxiliary outcome is taken into account to fulfill the completeness condition. However, the probability of observing this outcome in the following thought-experiment is zero.
    %It is also the most feasible measurement which Wigner or Student can perform. 
   The super-observer and Friend should agree on the value of input $x$, which was kept outside the Lab. For that reason, an additional condition, that the $d$-dimensional subspace spanned by $\{P^x_i\}^{d-1}_{i=0}$ remains invariant under the application of unitary for each $x$, is imposed on $U_w$.   Suppose the unitary is such that $U_w|F^x_i\ra = \sum_i \beta_i |F^x_i\ra + \gamma \ket{F^{x'}_j} $ for some non-zero value of $\gamma$, then after the measurement \eqref{A} there will be a non-zero probability that Friend is in a state of observing input $x'$ that is different from the actual input $x$. Thus, the unitaries are in the form 
   \be \label{Uw}
   U_w = \bigoplus_x U^x_w, 
   \ee 
   where $U_w^x$ is acting on the subspace of $P^x_i$.
       
    Apart from the value of $x$, note that the super-observer and Friend observe the same outcome, say, $a\in \{0,\dots,d-1\}$.  

    \item At $t_5$, Friend gets the value of $w$, matches the value of $x$ that was kept outside the Lab, and measures the ancilla to ensure that she obtained a definite outcome before. Moreover, by measuring 
    $Q_s$ in the basis \eqref{barA}, Friend can verify whether Wigner's measurement was \eqref{A} or not, since the state of $Q_s$ should collapse to $\ket{\psi^x_a}$ for outcome $a$. Thus, if the $d$-dimensional subspace spanned by $\{P^x_i\}^{d-1}_{i=0}$ is not invariant after applying $U_w$ for some $x$, then Friend will detect a discrepancy. 
    
\end{itemize}
%Therefore, by the end of every run of the experiment, Friend, Student and Wigner agree upon the values of $x,w$ and register the same outcome $a$.
This is repeated many times so that the super-observer obtains the empirical probability defined as
\begin{center}
\begin{tabular}{c c c c c c}
 & $t_0$ & $t_1$ & $t_2$ & $t_3$ & $t_4$ \\ 
 & $\downarrow$ & $\downarrow$ & $\downarrow$ & $\downarrow$ & $\downarrow ~ $ \\  
$p(a|\overline{A}_x,U_w) :=$ $p(a\ |$ & $x,w,$ & $\psi,$ & $\overline{A}_x,$ & $U_{w},$ & $\Omega)$,
\end{tabular}
\end{center} 
in which all the conditional variables are mentioned on the right-hand side according to the sequence of time. 
%Student agrees with Wigner about the empirical probabilitiesunder the condition \eqref{a1} which is specified below, assuming that Friend is unbiased and performs rank-one projective measurements. 
%While, in absence of Student, Friend and Wigner agree upon the the empirical probabilities under the conditions \eqref{a1},\eqref{a2}. \begin{assu}\label{a1}   Wigner applies unitary transformation for $w\neq 0$.\end{assu} 

%\begin{assu}\label{a2}   Wigner does not alter the classical variable $x,w$.\end{assu}

In what follows, $\{p(a|\overline{A}_x,U_w)\}$ denote the set of empirical probabilities 
%on which Wigner and Student agree upon, under the condition \eqref{a1} 
without assuming the form of the state $|\psi\ra$, measurements $\overline{A}_x$, and the unitaries $U_w$. Depending on the hypotheses AoM and NoM, we can obtain two sets of empirical probabilities. The first set is given by,
\begin{defn}\label{qcm}
[Quantum Correlations with absoluteness of measurement (QCAoM)]
Set of all observed probabilities $\{p(a|\overline{A}_x,U_w)\}$ obtained in the setup, employing universal QT with the additional condition that measurements are absolute according to definition \eqref{AOM}.
\end{defn}
and the second set is given by,
\begin{defn}\label{qce}
[Quantum Correlations with non-absoluteness of measurement (QCNoM)]
Set of all observed probabilities $\{p(a|\overline{A}_x,U_w)\}$ obtained in the setup, employing universal QT with the additional condition that measurements are not absolute according to definition \eqref{AOE}.
\end{defn}

%%%%%%%%%%%

Let us now propose a simple test in this extended Wigner's Friend setup that shows the existence of correlations in "UQT with NoM (def-\eqref{AOE})" but not in "UQT with AoM (def-\eqref{AOM})".

\subsection{An elementary test for correlations that cannot be observed in "UQT with AoM"}\label{sub1}
Consider the scenario where the state $|\psi\rangle \in \mathbbm{C}^2$ is arbitrary and the measurement performed by Friend in an arbitrary direction whose eigenstates are $|\psi_+\ra, |\psi_{-}\ra$.  
According to quantum theory along with Absoluteness of Measurement \eqref{AOM} the combined state of $Q_s$ and Lab after the measurement by Friend is obtained similar to \eqref{FSF},
\be \label{labsAoM}
p_+ \ket{F_+}\!\bra{F_+} + p_- \ket{F_-}\!\bra{F_-} ,
\ee 
where $p_{\pm} = |\la \psi|\psi_{\pm}\ra|^2$ is the probability of getting an outcome $\pm1$ with respect to Friend, and 
\be 
\ket{F_\pm} = |\psi_{\pm}\ra|f_{\pm}\ra,
\ee  
where $\ket{f_\pm}$ is the final state of the Lab such that the measurement device shows a definite outcome $\pm1$ and Friend observes a definite outcome $\pm1$. Following a similar argument to obtaining \eqref{FSW}, the quantum description of the combined state of $Q_s$ and Lab with Non-absoluteness of Measurements is 
\be \label{labsNoM}
\la \psi_+|\psi\ra \  \ket{F_+} + \la \psi_-|\psi\ra \ \ket{F_-} .
\ee 
We consider the scenario where the super-observer sometimes applies
a reversible process $U$ on the combined system of Lab and $Q_s$. Finally, the super-observer opens the Lab and asks Friend to reveal the outcome, that is, performs the following measurement 
\be 
\{|F_+\ra\!\la F_+|,|F_-\ra\!\la F_-|, \I - |F_+\ra\!\la F_+| - |F_-\ra\!\la F_- | \} 
\ee 
for the respective outcome denoted by $\{+1,-1,\emptyset\}$.
The last element of the above measurement is due to the fact that the dimension of the Lab and $Q_s$ has to be at least four but in general, may be much higher. With full generality, the Hilbert space of the Lab is larger than $\mathcal{H}_{L'}=\text{Span}\{\ket{f_+},\ket{f_-}\}$. When the combined state of the Lab and $Q_s$ does not belong $\mathbbm{C}^2\otimes\mathcal{H}_{L'}$, then the last element of Wigner's measurement would show up. One can also understand this outcome as taking into account the null results of Wigner's measurement. %However, we usually assume that the $U$ is such that the last outcome does not occur (does not matter).

Let us now evaluate the probability of obtaining outcome $\pm1$ of this measurement  for AoM after applying $U$ using \eqref{labsAoM},
\begin{eqnarray}\label{ppmU}
p(\pm1|U)&=&\sum_{a=+,-}\Tr\left(p_aU \ket{F_a}\!\bra{F_a}U^\dagger \ket{F_\pm}\!\bra{F_\pm}\right)\nonumber\\
&\leqslant & \max\{p_+,p_-\} \times \nonumber \\
&& \quad \Tr \bigg(U \underbrace{(\ket{F_+}\!\bra{F_+} + \ket{F_-}\!\bra{F_-})}_{\leqslant \I}U^\dagger \ket{F_\pm}\!\bra{F_\pm}\bigg) \nonumber \\
&\leqslant & \max\{p_+,p_-\} .
\end{eqnarray} 
On the other hand, the probability of obtaining outcome $\pm1$ for AoM when the super-observer does not apply any transformation is just
\begin{eqnarray}\label{ppmI}
p(\pm1|\I) &=& \Tr\left((p_+ \ket{F_+}\!\bra{F_+} + p_- \ket{F_-}\!\bra{F_-}) \ket{F_\pm}\bra{F_\pm}\right)  \nonumber \\
&=& p_\pm . 
\end{eqnarray} 
Let us now consider the expression
\be \label{T}
T = p(+1|U) - \left|p(+1|\I) - 1/2 \right| - \left|p(-1|\I) - 1/2 \right| ,
\ee 
which is a simple function of observed probabilities.
Replacing the probabilities in $T$ from \eqref{ppmU}-\eqref{ppmI} and using $p_++p_-=1$, we find that 
\begin{eqnarray} \label{Tleq1/2}
T &=& \max\{p_+,1-p_+\} - 2 \left|p_+- 1/2 \right| \nonumber \\
&\leqslant & 1/2 ,
\end{eqnarray} 
for AoM.
Remarkably, this relation holds \textit{irrespective} of the initial quantum state $|\psi\ra,$ the measurement $\{\ket{\psi_+},\ket{\psi_-}\}$ by Friend and the unitary $U$ by the super-observer.
However, the value of $T$ \eqref{T} can be even 1 in "UQT with NoM". Consider the quantum state
\beq 
\ket{\psi}=1/\sqrt{2}(\ket{\psi_+}+\ket{\psi_-}),
\eeq 
so that the combined state of $Q_s$ and Lab is
$1/\sqrt{2}(  \ket{F_+} + \ket{F_-} ) $ according to \eqref{labsNoM}. Clearly, for this state $p(\pm1|\I) = 1/2$. 
Furthermore, let the unitary applied by the super-observer be
\begin{eqnarray}
U = \frac{1}{\sqrt{2}}
\begin{pmatrix}
1&1\\-1&1
\end{pmatrix}\nonumber
\end{eqnarray}
written in the two-dimensional basis, $\{\ket{F_+},\ket{F_-}\}$. One can readily verify that the state after applying this unitary becomes $\ket{F_+}.$ Therefore, $p(+1|U)=1$, which in turn implies $T=1$. 

The consequence of this above observation is remarkable. If any super-observer observes a violation of condition $T\leqslant 1/2$, then he or she must be convinced, without knowing the details of the experiment, that the quantum theory with AoM can not reproduce all the empirical observations. Importantly, such a test of AoM is experimentally robust in the sense that the upper bound on $T$ will be higher than 1/2 when experimental non-idealities are taken into account. Consequently, if one observes the value of $T$ close to 1, then also it is possible to refute the hypothesis quantum theory with AoM. 

Moreover, the relation $T\leqslant 1/2$ holds true even in "UQT with AoM" if the super-observer applies an arbitrary \textit{unital} quantum channel on the combined system of $Q_s$ and Lab instead of applying perfect unitary. A general unital quantum channel $\Lambda$ on an operator $A$ is defined by as $\Lambda(A) = \sum_i K_iAK_i^\dagger$ such that $\sum_i K_iK^\dagger_i = \I $, where $\{K_i\}$ are the Kraus operators satisfying $\sum_i K^\dagger_iK_i =\I$.  In the case of a unital quantum channel applied by super-observer the observed probability in quantum theory with AoM can be obtained similar to \eqref{ppmU} as,
\begin{eqnarray}\label{ppmL}
p(\pm1|\Lambda)&\leqslant & \max\{p_+,p_-\} \times \nonumber \\
&& \Tr \bigg( \underbrace{\sum_i K_i(\ket{F_+}\!\bra{F_+} + \ket{F_-}\!\bra{F_-})K_i^\dagger }_{\leqslant \I} \ket{F_\pm}\!\bra{F_\pm}\bigg) \nonumber \\
&\leqslant & \max\{p_+,p_-\} .
\end{eqnarray} 
Here, we have used the unital nature $\sum_i K_iK^\dagger_i = \I $ as follows
\begin{eqnarray}
\I &=& \sum_i K_i (\ket{F_+}\!\bra{F_+} + \ket{F_-}\!\bra{F_-} ) K_i^\dagger \nonumber \\ 
&& + \sum_i K_i (\I - \ket{F_+}\!\bra{F_+} - \ket{F_-}\!\bra{F_-} ) K_i^\dagger \nonumber \\
& \geqslant & \sum_i K_i (\ket{F_+}\!\bra{F_+} + \ket{F_-}\!\bra{F_-} ) K_i^\dagger. 
\end{eqnarray}
To get from the second to the third line of the above expression, we used the fact that $K_iPK_i^\dagger$ are positive semi-definite  for any projector $P$. As a result,  \eqref{Tleq1/2} holds true even when Wigner can apply a unital quantum channel.

%Inspired from the above observation, we introduce a further extension of the Wigner's Friend setup aiming to characterize the correlations in terms of input-output statistics (without assuming the details of the experiment) for the two hypothesis, AoM and NoM. 

Now, we show that every correlation achievable in universal QT assuming absoluteness of measurements (QCAoM def-\eqref{qcm}) is a proper subset of correlations obtainable in universal QT assuming Non-absoluteness of measurement (QCNoM def-\eqref{qce}).

\subsection{QCAoM is a subset of QCNoM}
%%%
%In what follows, 
We would first show that QCAoM def-\eqref{qcm} is a subset of QCNoM def-\eqref{qce}.
%where as discussed before the evolution is specified by unitary dynamics. For this, we show that there exist correlations within QCNoM that can simulate all correlations within QCAoM. 

\begin{theorem}\label{thm:subset}
QCAoM is a subset of QCNoM, that is, QCAoM $\subseteq$ QCNoM.
\end{theorem}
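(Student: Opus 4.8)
The plan is to show that every correlation realizable under AoM can be reproduced by a suitable NoM setup, exploiting the freedom the super-observer has in choosing the block-diagonal unitaries \eqref{Uw}. The first step I would take is to reduce the problem to a fixed $d$-dimensional subspace. For each input $x$, both the AoM state \eqref{labsaom} and the NoM state \eqref{labsnom} are supported on $V_x := \text{Span}\{|F^x_i\ra\}_{i=0}^{d-1}$, the admissible unitary $U^x_w$ acts within $V_x$, and by \eqref{P} and \eqref{F} the projectors factorize as $P^x_i = |F^x_i\ra\!\la F^x_i|$. Since $U_w$ preserves $V_x$ and the state never leaves it, the auxiliary outcome of \eqref{A} has zero probability, and for each pair $(x,w)$ the data $\{p(a|\overline{A}_x,U_w)\}_a$ is just the distribution obtained by evolving a $d$-level state by a $d\times d$ unitary and reading it out in the basis $\{|F^x_i\ra\}$.

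Next I would keep the same preparation $|\psi\ra$ and the same Friend measurements $\overline{A}_x$ in the NoM realization, and only re-choose the super-observer's unitaries. Writing $[u^x_w]_{ai} = \la F^x_a|U^x_w|F^x_i\ra$, the AoM probabilities read $p_{\mathrm{AoM}}(a|\overline{A}_x,U_w) = \sum_i |\alpha^x_i|^2 |[u^x_w]_{ai}|^2 =: t^{x,w}_a$, which one checks is a genuine probability distribution in $a$: it is nonnegative and sums to $1$ by unitarity of $u^x_w$ together with $\sum_i|\alpha^x_i|^2=1$ from \eqref{psixi}. The convention $U_0=\I$ (the super-observer does nothing for $w=0$) holds in both theories and forces the $w=0$ slice to coincide, since $p_{\mathrm{NoM}}(a|\overline{A}_x,\I)=|\alpha^x_a|^2 = t^{x,0}_a$; thus that column matches automatically. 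For every $w\neq 0$ I would then define $U'^x_w$ to be any unitary on $V_x$ carrying the fixed NoM vector $|\Psi^x\ra = \sum_i \alpha^x_i|F^x_i\ra$ to $\sum_a \sqrt{t^{x,w}_a}\,|F^x_a\ra$. Such a unitary exists because both are unit vectors of $V_x$, and by construction $p_{\mathrm{NoM}}(a|\overline{A}_x,U'_w) = |\la F^x_a|U'^x_w|\Psi^x\ra|^2 = t^{x,w}_a$. Assembling $U'_w = \bigoplus_x U'^x_w$ yields a valid unitary of the required form \eqref{Uw}, and the resulting NoM table reproduces the AoM table entry by entry, establishing QCAoM $\subseteq$ QCNoM.

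The individual steps are elementary; the point that really carries the argument is the observation that the do-nothing setting $w=0$ simultaneously pins the diagonal weights $|\alpha^x_a|^2$ in both theories, so the NoM pure state is automatically forced to carry the same outcome weights as the AoM mixture at $w=0$. I expect the main (though mild) obstacle to be arguing cleanly that this shared diagonal does not over-constrain the $w\neq 0$ settings. This is resolved by noting that the blocks $U'^x_w$ are chosen independently for each $(x,w)$, so the single fixed vector $|\Psi^x\ra$ can be steered onto a different target for every $w$, and any output distribution over the $d$ outcomes is reachable from a fixed pure state by some unitary. A secondary consistency check worth spelling out is that, because $U'_w$ preserves each $V_x$, the off-diagonal projectors $P^{x'}_i$ with $x'\neq x$ and the auxiliary outcome of \eqref{A} contribute zero probability, so the NoM statistics are correctly normalized over $a$ for each $(x,w)$.
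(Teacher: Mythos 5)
Your proposal is correct and follows essentially the same route as the paper's own proof: keep the preparation $|\psi\ra$ and Friend's measurements fixed, match the $w=0$ slice automatically via $|\alpha^x_a|^2$, and for $w\neq 0$ replace each block $U^x_w$ by a unitary steering the NoM pure state $\sum_i \alpha^x_i|F^x_i\ra$ onto $\sum_a \sqrt{t^{x,w}_a}\,|F^x_a\ra$, where your $t^{x,w}_a$ is exactly the paper's $\beta_{a,w,x}$ in Eq.~\eqref{Uxt}. Your write-up is, if anything, slightly more careful, since you explicitly verify that $\sum_a t^{x,w}_a = 1$ (guaranteeing the target is a unit vector so the steering unitary exists), a point the paper leaves implicit.
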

\begin{proof}
We show that any probability $p(a|\overline{A}_x,U_w)$ that is obtained from arbitrary initial state $|\psi\ra$, measurement $\overline{A}_x$ and unitary $U_w$, can also be obtained within QCNoM. Taking the general form of $\overline{A}_x$ given in \eqref{barA} and using \eqref{labsaom} for AoM, the state  after $t_3$ is given by
\beq 
\rho_L=\sum^{d-1}_{i=0} |\alpha^x_i|^2 U_w  \ket{F^x_i}\!\bra{F^x_i} (U_w)^\dagger 
\eeq 
where $\ket{F^x_i}$ is defined in \eqref{F}.
As a consequence, the probability of obtaining outcome $a$ when Friend performs the measurement $\overline{A}_x$ is
\ben  
p(a|\overline{A}_x,U_w) &=& \sum\limits^{d-1}_{i=0} |\alpha^x_i|^2 |\la F^x_a|U_w|F^x_i\ra |^2 \nonumber\\
&=& \begin{cases} 
|\alpha_a^x|^2, & \text{if } w=0 \\
\sum\limits^{d-1}_{i=0} |\alpha^x_i \la F^x_a|U_w^x|F^x_i\ra |^2, & \text{if } w\neq 0
\end{cases}
\label{pof}
\een 
where $U_w=\bigoplus_x U_w^x$ and $U_w^x$ is acting on the $d$-dimensional subspace spanned by $\{|F^x_i\ra\}^{d-1}_{i=0}$. 
Let us consider a strategy in QCNoM in which the initial state and the measurements are the same as before. For $w=0$, the state after $t_3$ is given by \eqref{labsnom}.
For $w=0$, the probability of getting outcome $a$ is $|\alpha_a^x|^2$, which is the same as \eqref{pof}, while for $w\neq 0$, the applied unitary by Wigner is $\bigoplus_x \Tilde{U}^x_w $ so that
\beq \label{Uxt}
\Tilde{U}_w^x \left(\sum_{i=0}^{d-1} \alpha^x_i |F^x_i\ra\right) = \sum_{a=0}^{d-1} \sqrt{\beta_{a,w,x}} |F_a^x\ra 
\eeq 
where 
\beq 
\beta_{a,w,x}  = \sum^{d-1}_{i=0} |\alpha^x_i \la F^x_a|U^x_w|F^x_i\ra |^2 .
\eeq 
Finally, note that the transformed state at $t_3$ in the above Eq. \eqref{Uxt} is such that the probability $p(a|\overline{A}_x,U_w)$ is exactly the same as obtained for QCAoM in Eq. $\eqref{pof}$. This completes the proof.
\end{proof} 
%%%%%%%%%%%%%%%%%%%%%%%%%%%%%%%%%%%%%%%%%%%

\subsection{Correlations that are incompatible with QCAoM for rank-one measurement by Friend}

Before providing explicit examples of operational witnesses of the difference between QCAoM and QCNoM, we first obtain a relation that holds true for QCAoM.

\begin{theorem}\label{theo1}
In QCAoM, for any $d$-outcome rank-one projective measurement $\overline{A}$ given in \eqref{barA}, the following holds true for all $a=0,\dots ,d-1,$
\begin{eqnarray}\label{eq:thm1}
p(a|\overline{A},U)\leqslant \max_{i} p(i|\overline{A},\I) ,
\end{eqnarray}
where $U$ is any reversible process that can be applied by a super-observer on the combined system of $Q_s$ and Lab. Note that, we omit the random variable $x$ in the probability since this relation holds for every $x$.
\end{theorem}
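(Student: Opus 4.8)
The plan is to lift the two-outcome computation in \eqref{ppmU} to $d$ outcomes. Under AoM, the combined state of $Q_s$ and Lab immediately after Friend's measurement is the incoherent mixture $\sum_{i=0}^{d-1}|\alpha_i|^2\,\ket{F_i}\!\bra{F_i}$ of \eqref{labsaom} (suppressing the label $x$). Applying the super-observer's reversible (unitary) process $U$ and then projecting onto $\ket{F_a}\!\bra{F_a}$, I would write the outcome probability as
\begin{equation}
p(a|\overline{A},U)=\sum_{i=0}^{d-1}|\alpha_i|^2\,|\la F_a|U|F_i\ra|^2,
\end{equation}
which is exactly the $w\neq 0$ branch of \eqref{pof}. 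Since the right-hand side of the claim is $\max_i p(i|\overline{A},\I)$, and the $w=0$ branch of \eqref{pof} gives $p(i|\overline{A},\I)=|\alpha_i|^2$, the goal reduces to bounding this weighted sum of transition probabilities by $\max_i|\alpha_i|^2$.

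First I would factor out the largest weight: every summand is non-negative and $|\alpha_i|^2\leqslant\max_j|\alpha_j|^2$, so
\begin{equation}
p(a|\overline{A},U)\leqslant\Big(\max_i|\alpha_i|^2\Big)\sum_{i=0}^{d-1}|\la F_a|U|F_i\ra|^2.
\end{equation}
Next I would rewrite the residual sum as an expectation value, using that the $\ket{F_i}$ are orthonormal (which follows from $\la f^x_i|f^{x'}_j\ra=\delta_{i,j}\delta_{x,x'}$ together with the orthogonality of Friend's rank-one eigenstates),
\begin{equation}
\sum_{i=0}^{d-1}|\la F_a|U|F_i\ra|^2=\la F_a|\,U\Big(\sum_{i=0}^{d-1}\ket{F_i}\!\bra{F_i}\Big)U^\dagger\,|F_a\ra .
\end{equation}
The operator $\Pi:=\sum_i\ket{F_i}\!\bra{F_i}$ is the orthogonal projector onto $\mathrm{span}\{\ket{F_i}\}$, hence $\Pi\leqslant\I$, and the expectation value is therefore at most $\la F_a|UU^\dagger|F_a\ra=\la F_a|F_a\ra=1$. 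Combining the two displayed bounds gives $p(a|\overline{A},U)\leqslant\max_i|\alpha_i|^2=\max_i p(i|\overline{A},\I)$, which is \eqref{eq:thm1}.

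This is a direct generalization of the binary case and carries no serious obstacle; the only points needing care are the identification $p(i|\overline{A},\I)=|\alpha_i|^2$, which fixes the right-hand side of the claim, and the step $\Pi\leqslant\I$, which relies on the $\ket{F_i}$ forming merely an orthonormal set and not a complete basis---the Lab Hilbert space is strictly larger, so $\Pi$ is a proper sub-identity projector. Finally, exactly as in \eqref{ppmL}, I would remark that replacing $U$ by an arbitrary unital channel $\Lambda$ leaves the bound intact, since $\sum_i K_i\Pi K_i^\dagger\leqslant\sum_i K_i\I K_i^\dagger=\I$ plays precisely the role of $U\Pi U^\dagger\leqslant\I$.
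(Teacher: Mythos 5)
Your proposal is correct and follows essentially the same route as the paper's proof: both write the AoM probability as $\sum_i p_i q_i$ with $q_i = |\la F_a|U|F_i\ra|^2$, bound $\sum_i q_i \leqslant 1$ via the sub-identity projector $\sum_i \ket{F_i}\!\bra{F_i} \leqslant \I$, and conclude that the weighted sum is at most $\max_i p_i = \max_i p(i|\overline{A},\I)$. The only cosmetic difference is that you factor out the maximal weight before bounding the residual sum, whereas the paper bounds $\sum_i q_i$ first; your closing remark on unital channels likewise mirrors the paper's treatment of \eqref{ppmL}.
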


\begin{proof}
According to universal quantum theory along with Absoluteness of Measurement \eqref{AOM} the combined state of $Q_s$ and Lab after time $t_3$ is
\begin{eqnarray}\label{stateboth}
\rho_L= \sum^{d-1}_{i=0} p_i \ket{F_i}\!\bra{F_i}  ,
\end{eqnarray}
where $p_{i}$ is the probability of getting  outcome $i$ with respect to Friend (see Eq. \ref{labsaom}), and $\ket{F_i}$ is defined in \eqref{F}.   Since the measurement $\overline{A}$ is considered to be rank-one, it is defined by the set of orthonormal states $\{|\psi_i\ra\}$ given by \eqref{psixi}. Consequently, in the final measurement by the super-observer \eqref{A},
\be 
P_i =\ket{F_i}\!\bra{F_i}
\ee 
are also rank-one projectors.     Thus, for any reversible process $U$, 
the left-hand side of \eqref{eq:thm1} is
\begin{eqnarray}\label{proboutcome}
p(a|\overline{A},U)&=&\Tr\left(U\rho_LU^\dagger\ket{F_a}\!\bra{F_a}\right)\nonumber\\
&=& \sum_{i=0}^{d-1} \ p_i \ \underbrace{\Tr \left( U \ket{F_i}\!\bra{F_i} U^\dagger \ket{F_a}\!\bra{F_a} \right)}_{q_i} \nonumber \\
&=& \sum_{i=0}^{d-1} \ p_i \ q_i 
\end{eqnarray} 
for all $a = 0,\dots,d-1$. In the second line of the above equation, we denote the expression with trace by $q_i$ for each $i$. On the other hand, using the fact that $\sum_i \ket{F_i}\!\bra{F_i} \leqslant \I$ we know  
\be 
\sum_i U\ket{F_i}\!\bra{F_i} U^\dagger \leqslant \I .
\ee 
Since $\ket{F_a}\!\bra{F_a}$ is a rank-one projector, it follows from the above relation that
\be \label{qi}
\sum_{i=0}^{d-1} q_i \leqslant 1. \ee 
Due to fact that $p_i,q_i$ are all non-negative and satisfy the relations $\sum_i p_i =1$ as well as Eq. \eqref{qi}, the right-hand side of Eq. \eqref{proboutcome} is upper bounded by the maximum value of $p_i$, that is, 
\be p(a|\overline{A},U) \leqslant \max_{i=0,\dots,d-1} p_i \ .\ee
Finally, by noting that $p_i=p(i|\overline{A},\I)$, we obtain \eqref{eq:thm1}.
\end{proof} 
Let us now explore correlations that are not compatible with QCAoM but exist in QCNoM. Employing Result \ref{theo1} and the idea presented in Sec. \ref{sub1}, we propose a simple expression to distinguish between QCAoM and QCNoM. Consider the extended Wigner's Friend scenario described above, where Friend does not receive any variable $x$, and $w \in \{0,1\}$. Given any measurement $\overline{A}$ of the form \eqref{barA} having $d$ outcomes, Wigner seeks to maximize the following empirical expression over all possible preparations $|\psi\ra$ and reversible transformations $U$:  
\be \label{Tq}
T(\Vec{q}) =  p(0|\overline{A},U) - \sum_{i=0}^{d-1}|p(i|\overline{A},\I) -q_i|  
\ee
which is a function of the set of $d$ variables $\Vec{q}=(q_0,\dots,q_{d-1})$ such that $q_i \in [0,1)$ and $\sum_i q_i=1$.

\begin{theorem}\label{thm:Tq}
Within QCAoM where $\overline{A}$ is any $d$-outcome rank-one projective measurement, the maximum value of $T(\Vec{q})$ in Eq.\eqref{Tq} is bounded as,
\begin{equation}\label{Tqb}
  T(\Vec{q}) \leqslant \max\{q_0,\dots,q_{d-1}\}.  
\end{equation}
Moreover, there exists a realization in QCNoM such that $T(\Vec{q})=1$ for any values of $\Vec{q}$.
\end{theorem}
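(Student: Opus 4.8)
The plan is to treat the two claims separately: the QCAoM inequality \eqref{Tqb} follows almost immediately from Result~\ref{theo1}, while the QCNoM value $T(\Vec{q})=1$ is obtained by an explicit choice of preparation and unitary. For the upper bound, I would abbreviate $p_i := p(i|\overline{A},\I)$ and recall from \eqref{pof} (with $w=0$) that these are the Born weights $|\alpha_i|^2$, so $p_i \geqslant 0$ and $\sum_i p_i = 1$. Fix an index $j$ with $p_j = \max_i p_i$. Result~\ref{theo1} gives $p(0|\overline{A},U)\leqslant p_j$, and since each summand is nonnegative I would discard all but the $j$-th term to get $\sum_i |p_i - q_i| \geqslant |p_j - q_j|$. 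Together these yield $T(\Vec{q}) \leqslant p_j - |p_j - q_j|$. The step that closes the argument is the elementary inequality $a - |a-b| \leqslant b$ (equivalent to $|a-b| \geqslant a-b$), applied with $a=p_j$ and $b=q_j$: it gives $p_j - |p_j - q_j| \leqslant q_j \leqslant \max_i q_i$, which is precisely \eqref{Tqb}.

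For the QCNoM realization I would reverse-engineer the preparation so that both terms of \eqref{Tq} hit their extremes simultaneously. Choosing $|\psi\ra = \sum_{i=0}^{d-1}\sqrt{q_i}\,|\psi_i\ra$ makes $p(i|\overline{A},\I)=|\alpha_i|^2=q_i$ for every $i$, so the penalty sum $\sum_i |p(i|\overline{A},\I)-q_i|$ vanishes identically. Under NoM the combined post-measurement state is the coherent superposition $\sum_i \sqrt{q_i}\,|F_i\ra$ of \eqref{labsnom}, which is a unit vector since $\sum_i q_i = 1$. Because $|F_0\ra$ is also a unit vector in the same $d$-dimensional subspace $\mathrm{Span}\{|F_i\ra\}$, there exists a unitary $U$ rotating the former into the latter; extended block-diagonally as in \eqref{Uw} it is an admissible super-observer operation. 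This forces $p(0|\overline{A},U)=1$, and hence $T(\Vec{q}) = 1 - 0 = 1$ for the prescribed $\Vec{q}$.

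I do not expect a genuine obstacle once Result~\ref{theo1} is in hand; the delicate points are purely in the bookkeeping. In the upper bound, the key realization is that one need not evaluate the whole sum $\sum_i |p_i - q_i|$: retaining only the maximizing index $j$ is enough, and the trivial bound $a-|a-b|\leqslant b$ is exactly what converts the Result~\ref{theo1} estimate (phrased in the $p_i$) into the advertised bound $\max_i q_i$. In the achievability half, the only thing worth verifying is that the constructed $U$ respects the invariant-subspace constraint \eqref{Uw}, which is automatic because it is supported entirely on $\mathrm{Span}\{|F_i\ra\}$.
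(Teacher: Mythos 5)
Your proof is correct. The achievability half (preparing $\sum_i \sqrt{q_i}\,\ket{\psi_i}$ so that under NoM the combined state is $\sum_i \sqrt{q_i}\,\ket{F_i}$, then rotating it onto $\ket{F_0}$ with a unitary supported entirely on $\mathrm{Span}\{\ket{F_i}\}$, which automatically respects the block form \eqref{Uw}) is exactly the paper's construction. Where you genuinely diverge is the QCAoM upper bound. The paper also starts from Result \ref{theo1}, but then parametrizes $\max_i p(i|\overline{A},\I) = \max_i q_i + t$, collapses the remaining penalty terms via the triangle inequality together with the two normalizations $\sum_i p(i|\overline{A},\I) = \sum_i q_i = 1$ into a single term $-2|p(i^*|\overline{A},\I)-q_{i^*}|$, and finishes with a case analysis on the sign of $t$. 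You instead discard every penalty term except the one at the maximizing index $j$ and close with the elementary inequality $a - |a-b| \leqslant b$ applied to $a=p_j$, $b=q_j$. This is shorter, avoids the casework entirely, and in fact never invokes $\sum_i q_i = 1$ (only nonnegativity of the discarded terms), so your argument is marginally more general than the paper's. One thing the paper's proof supplies that yours does not: an explicit QCAoM (indeed classical) realization attaining $\max_i q_i$, i.e., tightness of \eqref{Tqb}. Since the statement only asserts the inequality, that omission is not a gap, but it is worth noting the bound cannot be improved.
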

\begin{proof}
Let us first consider that the maximum among the probabilities $\{p(i|\overline{A},\I)\}_i$ which is denoted by
\begin{eqnarray}
p(i^*|\overline{A},\I)=\max_i\{p(i|\overline{A},\I)\}.
\end{eqnarray}
Now, we can always express this quantity as
\be\label{41}
p(i^*|\overline{A},\I) = \max\{q_0,\dots,q_{d-1}\} + t,
\ee
%Let us first express the quantity $p(0|\overline{A},\I)$ as
%\be
%p(0|\overline{A},\I) = \max\{q_0,\dots,q_{d-1}\} + t
%\ee
for some $t$ such that $0\leq p(i^*|\overline{A},\I)\leq 1$. Using Result \ref{theo1} for expressing $p(0|\overline{A},U)$, and the normalization of probabilities, we see that the following holds within QCAoM
\ben
T(\Vec{q}) &\leqslant &  \max \{q_0,\dots,q_{d-1}\} + t - |p(i^*|\overline{A},\I) - q_{i^*}| \nonumber \\
&&- \sum_{i\neq i^*} \left|p(i|\overline{A},\I) - q_i \right|.
\een
Now, using the identity that $|A|+|B|\geq |A+B|$, we have that
\ben
T(\Vec{q}) &\leqslant &  \max \{q_0,\dots,q_{d-1}\} + t - |p(i^*|\overline{A},\I) - q_{i^*}| \nonumber \\
&& - \left|\sum_{i\neq i^*} p(i|\overline{A},\I) - \sum_{i\neq i^*} q_i \right| \nonumber \\
& = & \max \{q_0,\dots,q_{d-1}\} + t - 2|p(i^*|\overline{A},\I) - q_{i^*}|, \nonumber \\
\een 
where to arrive at the third line of the above expression, we used the fact that $\sum_{i} p(i|\overline{A},\I)=\sum_{i}q_i=1$.
From the above relation, it trivially follows that for $t<0$, Eq. \eqref{Tqb} holds. Now if $t> 0$, then from Eq. \eqref{41} $p(i^*|\overline{A},\I) - q_{i^*}\geq p(i^*|\overline{A},\I) - \max_i\{q_{i}\}= t$.
Therefore, the maximum value of the right-hand-side of the above equation is obtained for $t=0$, and hence, \eqref{Tqb} is true. The upper bound in QCAoM is achieved when $Q_s$ is prepared in the state,
\be \label{mis}
 \sum_i q_i \ket{\psi_i}\!\bra{\psi_i},
\ee
or,
\be \label{is}
 \sum_i \sqrt{q_i}\ket{\psi_i}
\ee
where $|\psi_i\ra$ are eigenvectors of $\overline{A}$. Let us denote $q_{i_m} = \max\{q_0,\dots,q_{d-1}\}$, and choose any unitary such that $U|F_{i_m}\ra = |F_0\ra$. The combined state \eqref{labsaom} after the measurement is $\sum_i q_i \ket{F_i}\!\bra{F_i},$ which results the second term of $T(\Vec{q})$ in \eqref{Tq} to be zero. For this choice of unitary, $p(0|\overline{A},U)=q_{i_m}$.
Note that the realization \eqref{mis} is classical, and therefore, the upper bound in \eqref{Tqb} also holds for classical theory. 

We now state a realization within universal QT assuming Non-absoluteness of Measurement def-\eqref{AOE} that can attain the value $T(\Vec{q})=1$ \eqref{Tq} for any $\Vec{q}$. The quantum state of $Q_s$ is given by \eqref{is}.
Universal QT with NoM assigns the state  $\sum_i \sqrt{q_i} \ket{F_i}$ to the combined system of $Q_s$ and Lab. Thus, the second term of $T(\Vec{q})$ in \eqref{Tq} is zero. Any unitary so that 
\be 
U \left(\sum_i \sqrt{q_i} \ket{F_i} \right) = \ket{F_0}
\ee 
yields $p(0|\overline{A},U)=1$.
\end{proof}
   
%%%%%%%%%%%%%%%%%%%%%%%%%%%%%%%

Violation of \eqref{Tqb} can also be understood as an operational witness of correlations that are incompatible with the absoluteness of measurements def-\eqref{AOM}. In the later part of this manuscript, we would discuss its implications  towards different interpretations of QT. Now, we extend the above-introduced though-experiment to two spatially separated parties, and show that there exist correlations in universal QT with NoM def-\eqref{AOE} but not in universal QT with AoM def-\eqref{AOM},   irrespective of the rank of Friend's measurement.   
%%%%%%%%%%%%%%%%%%%%%%%%%%%%%%%

\begin{widetext}

\begin{figure}[h!]
   \begin{center}
    \includegraphics[width=0.85\textwidth]{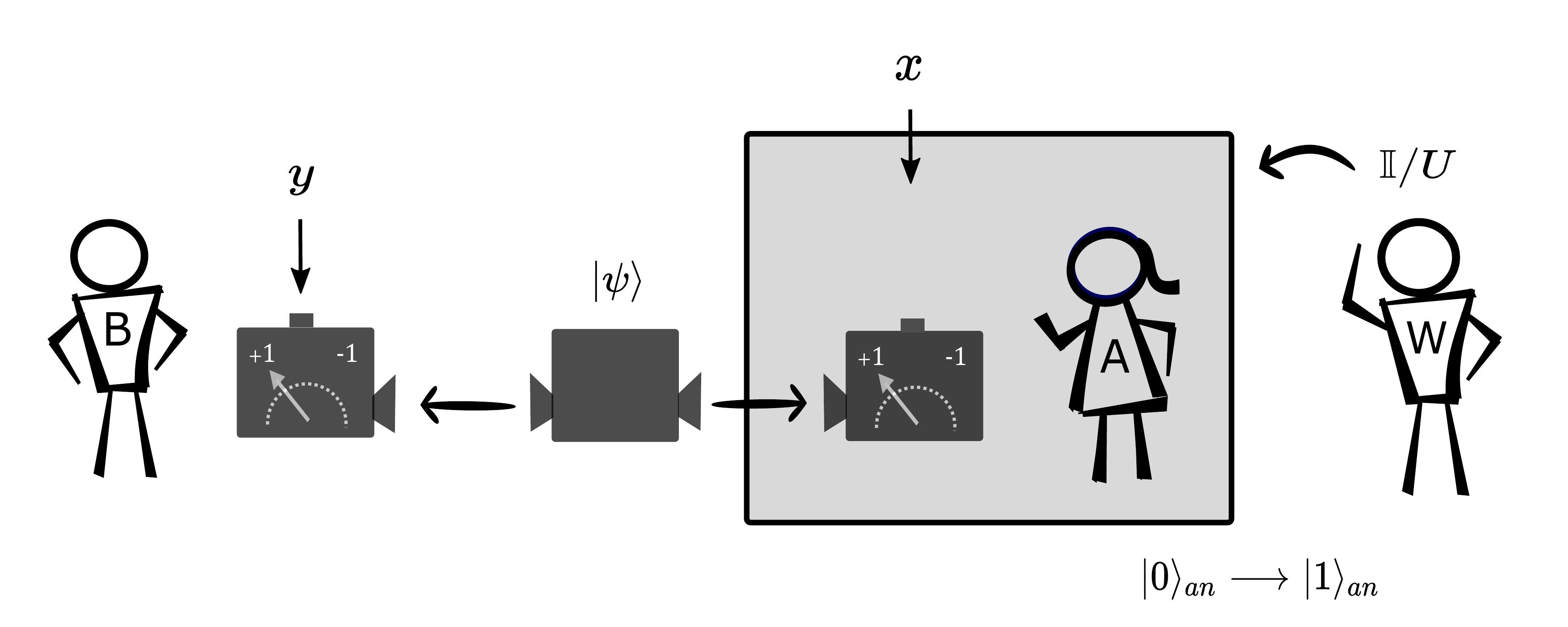} 
    \caption{\textbf{Activation of bipartite correlations.} Wigner's Friend, named Alice (A), and another observer, Bob (B), receive a system from the preparation device on which they perform measurements according to inputs $x,y,$ respectively, in their spatially separated Labs. Alice encodes the information about the fact that the measurement has been performed by reversing the ancillary bit $\ket{0}_{an}$ to $\ket{1}_{an}$, which can be read by Wigner, Student, and Alice at any further times. Alice and Bob are not allowed to communicate with each other. Wigner (W) or Student can apply a reversible process on Alice's Lab after her measurement is performed, and finally, Wigner or Student opens Alice's Lab to know the outcome of her measurement. By repeating this experiment many times, Wigner together with Bob observes the joint probabilities of the experiment. }
    \label{fig1}
       \end{center}
\end{figure}

\end{widetext}

\subsection{Bipartite correlations that are incompatible with QCAoM}

In general, we can extend the WF setup involving single quantum systems to a WF setup involving multipartite quantum systems. Let us consider the scenario illustrated in Fig. \ref{fig1}, wherein Wigner's Friend, named as Alice, and another spatially separated observer, say, Bob, perform a measurement on the subsystems of a bipartite quantum system. And the super-observer, Wigner or Student, can apply local reversible processes on Alice's Lab. Each run of this two-party version of the experiment comprises the same steps as before on Alice's side, along with the following additional steps on Bob's side.

\begin{itemize}[leftmargin=\parindent]

\item At $t_0$, apart from Alice and Wigner, Bob also receives an input variable that is denoted by $y$.

\item At $t_1$, the preparation device prepares a bipartite system with one part of the system entering Alice's Lab and the other part entering Bob's Lab.

\item At $t_2$, besides Alice's measurement $\overline{A_x}$ \eqref{barA}, Bob performs a measurement $B_y$ depending on $y$ that results an outcome $b$.

\item At $t_3$, Wigner applies some reversible operation denoted by $U_w$ on Alice's Lab, each of which keeps the subspace spanned by $\{|F^x_i\ra\}^{d-1}_{i=0}$ invariant. Wigner does not apply any operation for $w=0$.

\item At $t_4$, Wigner and Student perform the measurements \eqref{A} on Alice's Lab to obtain outcome $a$.
\end{itemize}  

After repeating many times, Wigner and Bob communicate with each other to obtain the joint probability distribution of outcomes $a,b$ for different inputs $x,y,w$. 
For convenience, let us drop the terms $\overline{A}_x, B_y,$ $\Omega$, and use the following concise notation for these joint probabilities:
\begin{center}
\begin{tabular}{c c c c c c}
 & $t_0$ & $t_1$ & $t_2$ & $t_3$ & $t_4$ \\ 
 & $\downarrow$ & $\downarrow$ & $\downarrow$ & $\downarrow$ & $\downarrow ~ $ \\  
$p(a,b|x,y,U_w) :=$ $p(a,b\ |$ & $x,y,w,$ & $\psi,$ & $\overline{A}_x,B_y,$ & $U_{w},$ & $\Omega)$ .
\end{tabular}
\end{center} 

Since the operations by Wigner and Bob are spatially separated, the obtained joint probabilities should satisfy the notion of causality or the no-signalling conditions. 
\begin{defn}[No-signalling conditions]\label{Nosigcond}
The local statistics of Alice as well as Wigner is independent of Bob's measurement settings and vice versa. 
\begin{eqnarray}
 \forall x,y,y',w, \quad   \sum_{b}p(a,b|x,y, U_w)=\sum_{b}p(b,a|x,y', U_w)\nonumber
\end{eqnarray}
    and 
\begin{eqnarray}
\forall x,x',y,w,w', \quad \sum_{a}p(a,b|x,y, U_w)=\sum_{a}p(b,a|x',y, U_{w'}) . \nonumber
\end{eqnarray} 
\end{defn}

%Here we construct an operational task such that correlations which belong to classical set with respect to Clauser-Horne-Shimony-Holt (CHSH) \cite{CHSH} expression can be activated to the non-local set in the presence of Wigner. 
As we show, there exists an empirical test that can single out QCNoM from QCAoM. The empirical test involves Alice, Bob, and Wigner as described above in which $x,y,a,b,w \in \{0,1\}$. Consider the following two quantities denoted by $P_w$ that corresponds to $w=0,1,$ respectively,
\begin{eqnarray}\label{SP1}
    P_0 = \sum_{a,b,x,y} c_{a,b,x,y} \ p(a,b|x,y,\I) , \\
    \label{SP2}
    P_1=\sum_{a,b,x,y} c_{a,b,x,y} \ p(a,b|x,y,U) ,
\end{eqnarray}
where 
\be 
c_{a,b,x,y} = 
\begin{cases}
\frac{1}{4}, \text{ if } a\oplus b = x.y \\
0, \text{ otherwise.}
\end{cases}
\ee 
Note that the expression \eqref{SP1} is nothing but the CHSH expression \cite{CHSH} written in the Wigner's Friend scenario.   It is well known that the maximum values of standard CHSH expression $P_0$ are 3/4 in classical theory and $1/2(1+1/\sqrt{2})\approx 0.854$ in quantum theory.    The quantity $P_1$ captures the presence of super-observer. Now, we are ready to state the main result whose explicit proof is provided in Appendix \ref{app}.

\begin{theorem}\label{thm:P1}
The following inequality holds true for QCAoM,
\begin{equation}
\label{bP1m}
P_1 \leqslant \min\left\{ \frac12 \left( 1+\frac{1}{\sqrt{2}} \right) , \max \left\{ P_0, \frac32 - P_0 \right\} \right\} .
\end{equation}
\end{theorem}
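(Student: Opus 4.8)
The plan is to reduce the bipartite AoM statistics to a doubly-stochastic post-processing of ordinary CHSH correlations and then carry out an elementary optimisation. First I would use absoluteness to pin down the state that the super-observer and Bob actually probe. Under AoM, Alice's rank-one measurement $\overline{A}_x$ is an absolute event, so right after $t_2$ the joint state of Alice's (Lab${}+Q_s$) and Bob's system is the classical-quantum mixture $\sum_i \ket{F^x_i}\!\bra{F^x_i}\otimes\sigma^x_i$, with $\sigma^x_i$ the unnormalised conditional state on Bob's side and $\la f^x_i|f^{x'}_j\ra=\delta_{ij}\delta_{xx'}$. Applying Wigner's unitary $U=\bigoplus_x U^x$ of the form \eqref{Uw} and then reading off $\ket{F^x_a}$ maps the label $i$ to the label $a$ through $q^x_{a,i}=|\la F^x_a|U^x|F^x_i\ra|^2$, while leaving Bob's side untouched. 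This yields
\beq
p(a,b|x,y,U)=\sum_{i} q^x_{a,i}\, p(i,b|x,y,\I).
\eeq
Invariance of the subspace together with unitarity makes $q^x$ doubly stochastic, and for two outcomes this forces the single-parameter form $q^x_{a,i}=\tfrac12\big(1+\mu_x(-1)^{a\oplus i}\big)$ with $\mu_x\in[-1,1]$.

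Next I would pass to correlators. Setting $E_{xy}=\sum_{i,b}(-1)^{i\oplus b}\,p(i,b|x,y,\I)$ and substituting $q^x$ into \eqref{SP1}--\eqref{SP2}, the normalisation $\sum_{i,b}p(i,b|x,y,\I)=1$ kills the constant pieces and leaves, with $C_1:=E_{00}+E_{01}$ and $C_2:=E_{10}-E_{11}$,
\beq
P_0=\tfrac12+\tfrac18\big(C_1+C_2\big),\qquad P_1=\tfrac12+\tfrac18\big(\mu_0 C_1+\mu_1 C_2\big).
\eeq
Since $\mu_0,\mu_1\in[-1,1]$, any super-observer satisfies $P_1\leqslant \tfrac12+\tfrac18\big(|C_1|+|C_2|\big)$, so the whole problem collapses to bounding $|C_1|+|C_2|$ in the two ways required by \eqref{bP1m}.

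For the Tsirelson piece I would write $|C_1|+|C_2|=\max_{s_1,s_2\in\{\pm1\}}\big(s_1C_1+s_2C_2\big)$ and observe that each sign choice is, after relabelling Alice's $\pm1$ observables $A_x\mapsto s_xA_x$, an ordinary CHSH combination of genuine quantum correlators; Tsirelson's bound gives $\leqslant 2\sqrt2$, hence $P_1\leqslant\tfrac12(1+1/\sqrt2)$. For the state-dependent piece I would use only $|C_1|,|C_2|\leqslant 2$ (each correlator lies in $[-1,1]$) and verify by a short case split on whether $C_1+C_2\leqslant 2$ that $|C_1|+|C_2|\leqslant\max\{C_1+C_2,\,4-(C_1+C_2)\}$; the affine relations above then turn the right-hand side into $\max\{P_0,\tfrac32-P_0\}$. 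Taking the smaller of the two upper bounds yields \eqref{bP1m}.

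The hard part will be the first paragraph: one must argue carefully that under AoM the net effect of the super-observer's unitary on the observed data is exactly a doubly-stochastic relabelling of Alice's outcome that commutes with Bob's measurement. This is precisely where AoM (rather than NoM) is used, since coherence would otherwise survive, and where the subspace-invariance constraint on $U$ enters; establishing the single-parameter form of $q^x$ there makes the rest routine.
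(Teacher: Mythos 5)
Your proposal is correct and follows essentially the same route as the paper's proof: the key step in both is that AoM reduces Wigner's intervention to a doubly-stochastic post-processing $p(a,b|x,y,U)=\sum_i q^x_{a,i}\,p(i,b|x,y,\I)$ of the undisturbed statistics, after which the optimization reduces to relabeled CHSH expressions bounded by Tsirelson's bound and by $\max\{P_0,\tfrac32-P_0\}$. The only differences are notational — you parametrize the $2\times 2$ doubly-stochastic matrices continuously by $\mu_x$ and work with correlators $C_1,C_2$, whereas the paper optimizes the same linear functional over the four extremal (permutation) points, which correspond exactly to your four sign choices $s_1 C_1+s_2 C_2$.
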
 
The above result has interesting consequences. As shown in Figure \ref{fig:p0p1}, Eq. \eqref{bP1m} provides a bound on $P_1$ depending on the value of $P_0$ . This can be used to propose the following inequality, which is satisfied by QCAoM and violated in QCNoM. 
\begin{figure}[h!]
    \centering
    \includegraphics[width=0.35\textwidth]{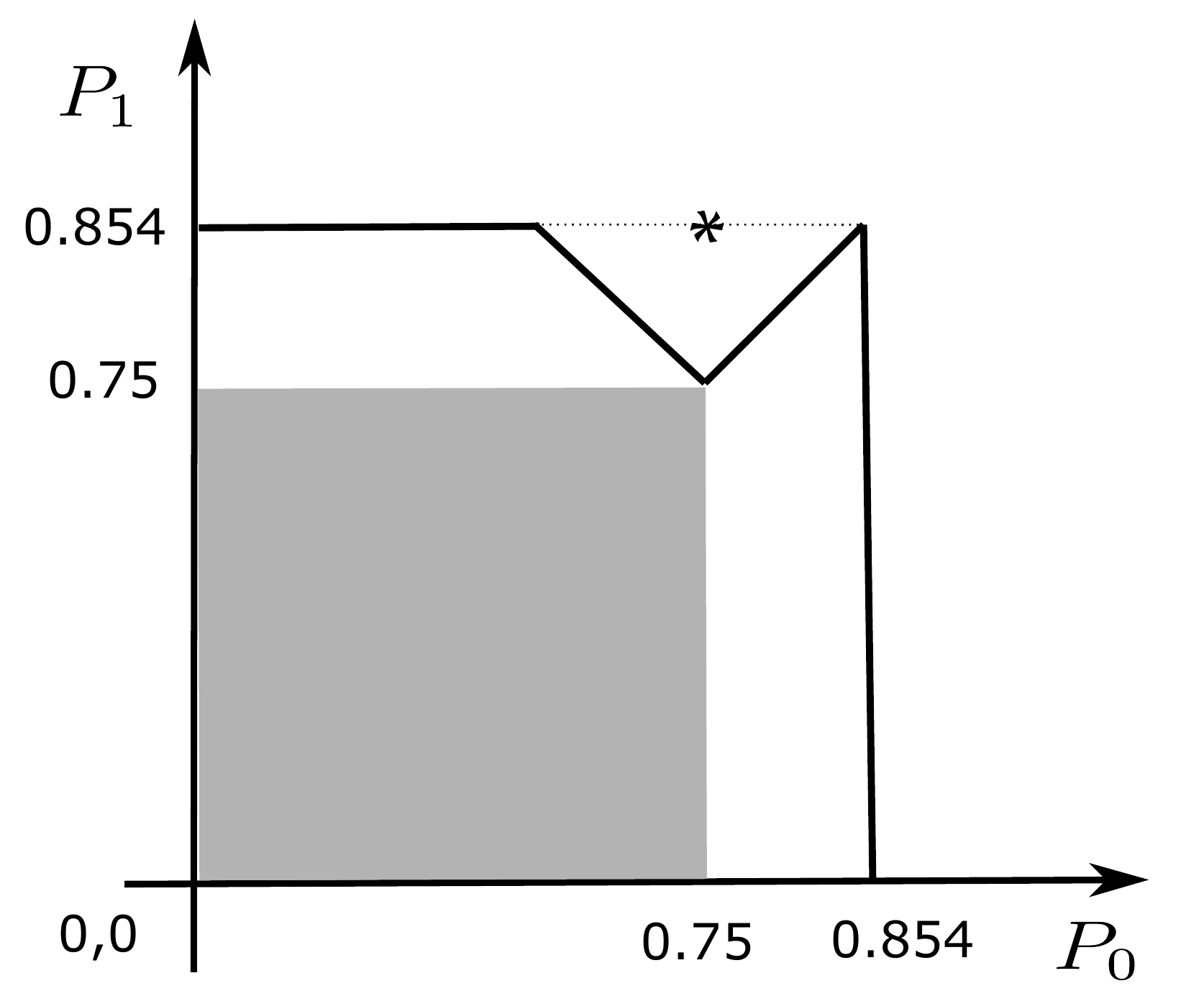}
    \caption{The values of $P_0,P_1$ that can be achieved from classical correlations are within the gray region, and those from QCAoM are within the boundary by the black line. While it is possible to obtain the values of $P_0,P_1$ marked by `*' in QCNoM as pointed out in Result \ref{thm:acg}. }
    \label{fig:p0p1}
\end{figure}

\begin{theorem}\label{thm:acg}
The following inequality holds true in QCAoM,
\begin{eqnarray}\label{bcP}
P_S = P_1 - \left|P_0- \frac{3}{4} \right| \leqslant \frac{3}{4} . 
\end{eqnarray}
Moreover, there exists a realization in QCNoM such that 
\beq \label{tbm}
P_S = \frac{1}{2}\left(1+\frac{1}{\sqrt{2}}\right). 
\eeq 
\end{theorem}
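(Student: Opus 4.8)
The statement has two independent parts, and the plan is to treat them separately: the QCAoM bound \eqref{bcP} should drop out of Result~\ref{thm:P1} by an elementary case analysis, while the QCNoM value \eqref{tbm} requires exhibiting an explicit realization sitting at the starred point of Fig.~\ref{fig:p0p1}.

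\emph{Upper bound in QCAoM.} I would use only the second branch of the minimum in \eqref{bP1m}, namely the valid inequality $P_1 \leqslant \max\{P_0,\, \frac{3}{2}-P_0\}$, and then resolve the absolute value by splitting on the sign of $P_0-\frac{3}{4}$. If $P_0 \geqslant \frac{3}{4}$, then $\max\{P_0,\frac{3}{2}-P_0\}=P_0$ and $|P_0-\frac{3}{4}|=P_0-\frac{3}{4}$, so $P_S \leqslant P_0-(P_0-\frac{3}{4})=\frac{3}{4}$. If instead $P_0<\frac{3}{4}$, then $\max\{P_0,\frac{3}{2}-P_0\}=\frac{3}{2}-P_0$ and $|P_0-\frac{3}{4}|=\frac{3}{4}-P_0$, so $P_S \leqslant (\frac{3}{2}-P_0)-(\frac{3}{4}-P_0)=\frac{3}{4}$. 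Both cases give $P_S\leqslant \frac{3}{4}$, which is exactly \eqref{bcP}.

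\emph{Realization in QCNoM.} The mechanism I would exploit is the one already used in Sec.~\ref{sub1} and in the proof of Result~\ref{thm:subset}: since in NoM Alice's lab stays coherent, Wigner's reversible operation $U=\bigoplus_x U^x_w$ acts unitarily on $\mathrm{span}\{\ket{F^x_i}\}_i$ and is equivalent to replacing Alice's measurement basis for input $x$ by a $U^x$-rotated basis, whereas $w=0$ leaves her actual measurement in place. Consequently $P_0$ in \eqref{SP1} is just the CHSH value of Alice's \emph{actual} measurements against the shared state and Bob's settings, while $P_1$ in \eqref{SP2} is the CHSH value of her \emph{rotated} measurements against the \emph{same} state and the \emph{same} Bob settings. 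I would take the shared state maximally entangled and fix Bob's two measurements together with Alice's rotated ($w=1$) measurements to be the standard Tsirelson-optimal quadruple, so that $P_1=\frac{1}{2}(1+\frac{1}{\sqrt{2}})$. Because $P_1$ is an ordinary bipartite no-signalling quantum correlation it cannot exceed this value, so to reach $P_S=\frac{1}{2}(1+\frac{1}{\sqrt{2}})$ one is forced to make $|P_0-\frac{3}{4}|=0$; I would therefore pick Alice's actual ($w=0$) measurement directions from the family that yields CHSH value exactly $S=2$, i.e.\ $P_0=\frac{3}{4}$, against the already-fixed Bob settings. Then $P_S=P_1=\frac{1}{2}(1+\frac{1}{\sqrt{2}})$, as claimed. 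Finally I would check that each $U^x$ is admissible: being unitary on the two-dimensional subspace $\mathrm{span}\{\ket{F^x_0},\ket{F^x_1}\}$, it satisfies the invariance requirement \eqref{Uw}, so the realization respects the protocol.

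\emph{Main obstacle.} The only genuine work is the decoupling step: verifying that, with the \emph{same} maximally entangled state and the \emph{same} fixed Bob measurements, one can simultaneously hold $P_1$ at the Tsirelson value (through Alice's rotated settings) and pin $P_0$ at exactly $\frac{3}{4}$ (through Alice's actual settings), using Alice's measurement freedom alone. This reduces to a short Bloch-vector computation showing that $\vec{a}_0\cdot(\vec{b}_0+\vec{b}_1)+\vec{a}_1\cdot(\vec{b}_0-\vec{b}_1)=2$ admits solutions for the fixed directions $\vec{b}_y$, together with confirming that the NoM state assignment (the bipartite analogue of \eqref{labsnom}), followed by the measurement \eqref{A}, reproduces precisely these two CHSH statistics for $w=0$ and $w=1$.
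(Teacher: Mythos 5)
Your proposal is correct, and it departs from the paper's proof in two instructive ways. For the QCAoM bound, the paper also starts from Result \ref{thm:P1}, but it splits into cases according to which branch of the minimum in \eqref{bP1m} is active, and in the regime $\max\{P_0,\tfrac{3}{2}-P_0\}\geqslant \tfrac{1}{2}(1+\tfrac{1}{\sqrt{2}})$ it invokes the Tsirelson branch together with a separate estimate to recover $P_S\leqslant 3/4$. Your observation that the single (weaker) inequality $P_1\leqslant\max\{P_0,\tfrac{3}{2}-P_0\}$ already yields the identity $\max\{P_0,\tfrac{3}{2}-P_0\}-|P_0-\tfrac{3}{4}|=\tfrac{3}{4}$ for \emph{every} value of $P_0$ makes the second case superfluous; this is a genuine streamlining. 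For the QCNoM realization, the paper proceeds by explicit computation: maximally entangled state, $\overline{A}_0=B_0=\sigma_z$, $\overline{A}_1=B_1=\sigma_x$, block rotations $U^x$ by $\pi/8$, and direct evaluation of the four correlation sums. Your route---reducing NoM statistics to ordinary bipartite quantum correlations in which Alice's basis for input $x$ is replaced by the $U^{x\dagger}$-rotated basis, then fixing the rotated settings at the Tsirelson-optimal quadruple and pinning the actual settings at CHSH value $2$---buys two things: the admissibility of $U^x$ (the block form \eqref{Uw}) and both target values become transparent without computing any joint probabilities, and you get the structural corollary that $P_1$ can never exceed the Tsirelson value in QCNoM, so $P_0=\tfrac{3}{4}$ is \emph{forced}, which explains (rather than merely exhibits) the starred point in Fig.\ \ref{fig:p0p1}. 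A further practical advantage: the paper's explicit example is sensitive to outcome-labelling and rotation-sign conventions (with the stated winning condition $a\oplus b=x\cdot y$ and the standard labelling of $\sigma_x$ outcomes, its quoted values for $P_0$ and the correlation sums only come out right after relabelling one setting and choosing the rotation directions appropriately), whereas your construction keeps Alice's actual settings free and fixes them by the solvable condition $\vec{a}_0\cdot(\vec{b}_0+\vec{b}_1)+\vec{a}_1\cdot(\vec{b}_0-\vec{b}_1)=2$ (e.g.\ $\vec{a}_0=\vec{a}_1=\vec{b}_0$ works), sidestepping that pitfall. The one step you leave implicit---that $\bra{F^x_a}U^x$ acting on the bipartite analogue of \eqref{labsnom} reproduces the Born statistics of a rotated Alice basis---is a one-line computation using the unitarity of $U^x$ on $\mathrm{span}\{\ket{F^x_i}\}$, so there is no gap.
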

%%%%%%%%
\begin{proof}
For a given quantum strategy, say, the value of $P_0=q$.  Substituting the upper bound of $P_1$ from \eqref{bP1m} we obtain the figure of merit \eqref{bcP} within QCAoM.
  Let us consider the first case, when $\max\{q, 3/2-q \}\leqslant 1/2(1+1/\sqrt{2}).$ From \eqref{bP1m}, we have that
  
\beq 
P_S \leqslant \max\{q, 3/2-q \} - |q - 3/4| = 3/4 
\eeq 
for any value of $q \in [0,1]$. 
  In the second case, where $\max\{q, 3/2-q \} \geqslant 1/2(1+1/\sqrt{2})$, the fact that $q\leqslant 1/2(1+1/\sqrt{2})$ implies that
\begin{eqnarray}
\frac{3}{2}-q\geqslant \frac{1}{2}\left(1+\frac{1}{\sqrt{2}}\right) ,
\end{eqnarray}
which means $q\leqslant \frac{3}{2}-\frac{1}{2}\left(1+\frac{1}{\sqrt{2}}\right)$. Using this bound of $q$ and condition \eqref{bP1m}, we compute $P_S$  to get
\be
P_S \leqslant \frac{1}{2}\left(1+\frac{1}{\sqrt{2}}\right) - \left|\frac{3}{2}-\frac{1}{2}\left(1+\frac{1}{\sqrt{2}}\right) - \frac34 \right| \leqslant \frac34.
\ee 
  
We now show that there exists QCNoM that achieves the value of $P$ given by \eqref{tbm}. For this, the maximally entangled state $\ket{\psi} = (1/\sqrt{2})\left(\ket{00}+\ket{11}\right)_{AB}$ is prepared on which Alice and Bob choose the measurements
\beq 
\overline{A}_0=\sigma_z, \overline{A}_1=\sigma_x, \ B_0=\sigma_z, B_1=\sigma_x . 
\eeq 
Now, the description of the joint system with respect to the super-observer assuming NoM def-\eqref{AOE} is determined by the unitary evolution after the measurements are done. Let us first look at the combined state of Alice's Lab and the system on Bob's side when the measurements are $\overline{A}_0=\sigma_z$ and $\overline{A}_1=\sigma_x$,
\begin{eqnarray}
& \ket{\Psi}_{0}=\frac{1}{\sqrt{2}}\left(\ket{0}\ket{f_{0}^0}\ket{0}+\ket{1}\ket{f_{1}^0}\ket{1}\right)_{LB}, \label{Psi0}\\
& \ket{\Psi}_{1}=\frac{1}{\sqrt{2}}\left(\ket{+}\ket{f_{0}^1}\ket{+}+\ket{-}\ket{f_{1}^1}\ket{-}\right)_{LB} . \label{Psi1}
\end{eqnarray}
It can be readily verified from the above expressions that,
\beq
\forall x,y=0,1, \ p(0,0|x,y,\I) = p(1,1|x,y,\I) = \frac{1}{2},
\eeq 
and thus, $P_0 = 3/4$.
For $w=1$, say, the super-observer performs the reversible operation $U=U^0\bigoplus U^1$, where
\begin{eqnarray}
U^x=
\begin{pmatrix}
\cos(\frac{\pi}{8})&-\sin(\frac{\pi}{8})\\\sin(\frac{\pi}{8})&\cos(\frac{\pi}{8})
\end{pmatrix}
\end{eqnarray}
written in the basis $\{\ket{0}\ket{f_{0}^x},\ket{1}\ket{f_{1}^x}\}$.
The joint probabilities after the unitary is applied should be evaluated on the combined state $U^x\ket{\Psi}_x$, wherein $\ket{\Psi}_x$ is given in \eqref{Psi0}-\eqref{Psi1}.
It follows from some simple calculations that the joint probabilities are
\begin{eqnarray}
p(0,0|0,0,U)+p(1,1|0,0,U)&=&\cos^2\left(\frac{\pi}{8}\right),\nonumber\\
p(0,0|0,1,U)+p(1,1|0,1,U)&=&\frac{1}{2}\left(1+\sin\left(\frac{\pi}{4}\right)\right),\nonumber\\
p(0,0|1,0,U)+p(1,1|1,0,U)&=&\frac{1}{2}\left(1+\sin\left(\frac{\pi}{4}\right)\right),\nonumber\\
p(0,0|1,1,U)+p(1,1|1,1,U)&=&\cos^2\left(\frac{\pi}{8}\right),
\end{eqnarray}
such that Eq.\eqref{tbm} holds. 
\end{proof}
 
Violation of Eq. \eqref{bcP} can be understood as an activation of quantum non-locality in the bipartite scenario when the description of the measurement process of a Friend is via "UQT with NoM". Notice that the strategy to violate the success probability $P_S$ \eqref{bcP} in  "UQT with NoM" is when the correlations shared among Alice and Bob, in the absence of Wigner, are local. When Wigner applies the unitary on Bob's Lab, then the local correlations are transformed into non-local ones, thus activating a resource that could not be possible when using "UQT with AoM". Consequently, we showed above that a resource can be generated when a Friend's Lab is described using "UQT with NoM" instead of "UQT with AoM". Further, the maximal value obtainable using "UQT with AoM" is the same when using classical or local strategies.

%%%%%%%%%%%
%For a note, we state the realisation within universal QT with Non-absoluteness of measurement def-\eqref{AOE} that can attain the value of $P_S$ given in \eqref{tbm}. For this, the maximally entangled state $\ket{\psi} = (1/\sqrt{2})\left(\ket{00}+\ket{11}\right)_{AB}$ is prepared on which Alice and Bob choose the measurements\begin{eqnarray}\overline{A}_0=\sigma_z, \overline{A}_1=\sigma_x, \ B_0=\sigma_z, B_1=\sigma_x ,\end{eqnarray}and Wigner performs the following unitaries,\begin{eqnarray}U^x=\begin{pmatrix}\cos(\frac{\pi}{8})&-\sin(\frac{\pi}{8})\\\sin(\frac{\pi}{8})&\cos(\frac{\pi}{8})\end{pmatrix} \quad \forall x =0,1.\end{eqnarray}In the Appendix, we give explicit steps to calculate $P_S$ using the above mentioned quantum realisation.

%\re \subsection{Cost to reproduce QCNoM from QCAoM}    

\section{Conclusions}

In recent times, there has been a renewed interest in the WF scenario due to the no-go theorem put forward by Frauchiger and Renner \cite{Renner}, where it is shown that QT is incompatible with three assumptions, namely universality, consistency, and single-outcome. 
%Universality implies that postulates of QT are valid for any physical system, consistency implies that two different observers can describe each other's result consistently, and single-outcome imposes that the observers in a particular run of experiment always observe only one single outcome of their performed measurement. 
Any interpretation of QT compatible with all three assumptions cannot be true. For showing the contradiction, a scenario consisting of four different observers was constructed where two observers behave like Friends and the other two behave like Wigner from the original Wigner's Friend scenario. As shown in Ref. \cite{Renner}, the observers predict different results for the same experiment if the assumptions of universality, consistency, and single-outcome hold. A similar no-go theorem exploiting Bell inequalities \cite{Bell, Bell66} was proposed by Brukner \cite{Brukner} imposing the assumption that measurement outcomes are objective fact. Contrary to the Frauchiger-Renner approach, the scenario involved two Friends who are spatially separated, with each receiving a qubit on which two local measurements are performed and two super-observers who have access to their Labs each. This is later enhanced to operational level \cite{Wiseman}. Many interesting features of the Wigner's Friend scenario have been further analyzed from different perspectives \cite{MZ,Matzkin2020epl,Matzkin2020pra,Elouard2021}. One can also understand the no-go theorems from Brukner \cite{Brukner} and later in \cite{Wiseman}, as witnesses of quantum correlations that are incompatible with AoM.

Another interesting consequence of our work is some interpretations of quantum theory. For all the other interpretations apart from Copenhagenish type interpretations in which there exists operational inconsistency, our results provide an experimental way to refute some of the interpretations of quantum theory. Any violation of \eqref{Tleq1/2}, \eqref{Tqb}, or \eqref{bcP} would rule out universal quantum theory with AoM def-\eqref{AOM} and collapse theories. In Appendix \ref{app:B}, we briefly discuss different interpretations of QT based on the perceptions of AoM and NoM, and whether they lead to consistent predictions.
The task of activation of bipartite correlations shows that non-locality can be activated from local correlations in UQT with NoM but not with AoM. Thus, a resource that was inaccessible in UQT with AoM can be activated in UQT with NoM. 

Further, we provided an operational framework to explore interesting correlations where macroscopic systems like observers or classical registers are treated as physical systems governed by universal physical laws. An important premise in our approach is that any physical system can be described according to quantum theory. We believe that this assumption can be relaxed and the theorems provided in this manuscript can be extended for any general probabilistic theory with certain properties. Our framework can be used to create more exotic scenarios where the distinctions among different interpretations of quantum theory become more relevant. 
%In this work we assumed that Wigner performs the measurement in the same basis as Friend, however, it might be interesting to consider scenarios where Wigner and Friend measure in different basis. This might be helpful to understand the measurement problem in a more rigorous way. 
Also, it would be interesting to see if the results presented in this manuscript hold true even for any reversible quantum channel instead of unitary transformations applied by Wigner. Also, it might be interesting from a foundational perspective, to find necessary and sufficient conditions that fully characterize quantum correlation with absoluteness of measurement.
\\
\\
\subsection*{Acknowledgements} 
We are thankful
to M. S. Leifer for his
lectures on quantum foundations at the {\it Perimeter Institute
Recorded Seminar Archive}. This work is supported by the Foundation for Polish Science through the First Team project
(No First TEAM/2017-4/31) and the Department of Science \& Technology, India, through the National Post-Doctoral Fellowship (PDF/2020/001682).

\onecolumngrid

\appendix

\section{Proof of Result \ref{thm:P1}}\label{app}

%\begin{reptheorem}{thm:P1}The following inequality holds true in QCAoM,\begin{equation}\label{bP1}P_1 \leqslant \max \{ P_0, 3/2 - P_0 \} .\end{equation}\end{reptheorem} 

\begin{proof}[Proof of Result \ref{thm:P1}]
Let us say the reversible process $U = \bigoplus_x U^x$ acts on Lab as
\begin{eqnarray}
&U^x\ket{F_0^x}&=\alpha_{0,0}^x\ket{F_0^x}+\alpha_{0,1}^x\ket{F_1^x},\quad \text{and}\nonumber\\
&U^x\ket{F_1^x}&=\alpha_{1,0}^x\ket{F_0^x}+\alpha_{1,1}^x\ket{F_1^x}
\end{eqnarray}
where $\ket{F_i^x}=\ket{\psi_i^x}\ket{f_i^x}$ represent the state \eqref{F} of the respective Labs when Alice observes an outcome $i\in\{0,1\}$ after performing the measurement $A_x$, and
\beq \label{nalpha}
|\alpha^x_{a,0}|^2+|\alpha_{a,1}^x|^2 =  1 \quad \forall x,a = 0,1.
\eeq 
The fact that $\ket{F^x_0},\ket{F^x_1}$ are orthonormal, implies $|\alpha^x_{0,0}\alpha^x_{1,0}| = |\alpha^x_{0,1} \alpha^x_{1,1}|$. Consequently, it follows from \eqref{nalpha} that,
\ben \label{oalpha}
|\alpha^x_{0,0}|^2+|\alpha_{1,0}^x|^2 = |\alpha^x_{0,1}|^2+|\alpha_{1,1}^x|^2 =  1, \quad &\forall x = 0,1.
\een 
On the other hand, using the no-signalling conditions \eqref{Nosigcond}  for any joint probability, we have
\begin{eqnarray}
p(a,b|x,y,U)&=& p(b|B_y)p(a|x,U,y,b)\nonumber\\
&=& p(b|B_y)\left(|\alpha_{a,a}^x|^2p(a|x,\I,y,b)+|\alpha_{a\oplus 1,a}^x|^2p(a\oplus 1|x,\I,y,b)\right) .
\end{eqnarray}
Thus, we can write the joint probability distribution in the presence of super-observer who can access Alice's Lab as,
\begin{eqnarray}
p(a,b|x,y,U)=|\alpha_{a,a}^x|^2p(a,b|x,y,\I)+|\alpha_{a\oplus 1,a}^x|^2p(a\oplus 1,b|x,y,\I) .
\end{eqnarray}
Using the above, the expression of $P_1$ in Eq. \eqref{SP2} can now be simplified as,
\begin{eqnarray}
P_1 = \sum_{a,b,x,y} c_{a,b,x,y} \left(|\alpha_{a,a}^x|^2p(a,b|x,y,\I)+|\alpha_{a\oplus 1,a}^x|^2p(a\oplus 1,b|x,y,\I)\right) .
\end{eqnarray}
Expanding the above quantity, we find

\begin{eqnarray}
P_1 & = & \frac{1}{4} \bigg( |\alpha_{0,0}^0|^2 p(0,0|0,0,\I) + |\alpha_{1,1}^0|^2 p(1,1|0,0,\I) + |\alpha_{1,0}^0|^2 p(1,0|0,0,\I) + |\alpha_{0,1}^0|^2 p(0,1|0,0,\I) \nonumber \\
&& + |\alpha_{0,0}^0|^2 p(0,0|0,1,\I) + |\alpha_{1,1}^0|^2 p(1,1|0,1,\I) + |\alpha_{1,0}^0|^2 p(1,0|0,1,\I)+ |\alpha_{0,1}^0|^2 p(0,1|0,1,\I) \nonumber\\
&& + |\alpha_{0,0}^1|^2 p(0,0|1,0,\I)+ |\alpha_{1,1}^1|^2 p(1,1|1,0,\I) + |\alpha_{1,0}^1|^2 p(1,0|1,0,\I) + |\alpha_{0,1}^1|^2 p(0,1|1,0,\I) \nonumber\\
&& + |\alpha_{0,0}^1|^2 p(0,1|1,1,\I)+ |\alpha_{1,1}^1|^2 p(1,0|1,1,\I) +|\alpha_{1,0}^0|^2 p(1,1|1,1,\I)+|\alpha_{0,1}^1|^2 p(0,0|1,1,\I) \bigg) .
\end{eqnarray}
Since the above quantity is a linear function of the eight variables $\{|\alpha_{a,a}^x|^2, |\alpha_{a,a^\perp}^x|^2 \}$ satisfying \eqref{nalpha} and \eqref{oalpha}, it suffices to consider the extremal values of these variables to obtain its upper bound. It can be checked that there are four extremal values of these variables as follows
\ben 
& (i) \quad & |\alpha^0_{0,0}|^2 = |\alpha^0_{1,1}|^2 = |\alpha^1_{0,0}|^2 = |\alpha^1_{1,1}|^2 = 1, \nonumber \\ 
&& |\alpha^0_{0,1}|^2 = |\alpha^0_{1,0}|^2 = |\alpha^1_{0,1}|^2 = |\alpha^1_{1,0}|^2 = 0 ; \nonumber \\
& (ii) \quad & |\alpha^0_{0,0}|^2 = |\alpha^0_{1,1}|^2 = |\alpha^1_{1,0}|^2 = |\alpha^1_{0,1}|^2 = 1, \nonumber \\ 
&& |\alpha^0_{0,1}|^2 = |\alpha^0_{1,0}|^2 = |\alpha^1_{1,1}|^2 = |\alpha^1_{0,0}|^2 = 0 ; \nonumber \\ 
& (iii) \quad & |\alpha^0_{1,0}|^2 = |\alpha^0_{0,1}|^2 = |\alpha^1_{0,0}|^2 = |\alpha^1_{1,1}|^2 = 1, \nonumber \\ 
&& |\alpha^0_{0,0}|^2 = |\alpha^0_{1,1}|^2 = |\alpha^1_{0,1}|^2 = |\alpha^1_{1,0}|^2 = 0 ; \nonumber \\
& (iv) \quad & |\alpha^0_{1,0}|^2 = |\alpha^0_{0,1}|^2 = |\alpha^1_{1,0}|^2 = |\alpha^1_{0,1}|^2 = 1, \nonumber \\ 
&& |\alpha^0_{0,0}|^2 = |\alpha^0_{1,1}|^2 = |\alpha^1_{0,0}|^2 = |\alpha^1_{1,1}|^2 = 0 .
\een 
These four extremal points correspond to four different CHSH expressions under relabelling of inputs, and accordingly,
\begin{eqnarray}\label{4chsh}
P_1\leqslant \frac{1}{4}\max\{&\sum_{a,b,x,y}& p(a\oplus b=x\cdot y|x,y,\I),\nonumber\\
&\sum_{a,b,x,y}& p(a\oplus b=(x\oplus1)\cdot (y\oplus1)|x,y,\I),\nonumber\\
&\sum_{a,b,x,y}& p(a\oplus b=(x\oplus1)\cdot y|x,y,\I),\nonumber\\
&\sum_{a,b,x,y}& p(a\oplus b=x\cdot (y\oplus1)|x,y,\I)\} .
\end{eqnarray} 
Note that the first CHSH expression is the same as $P_0$, while the second CHSH expression is nothing but $1 - P_0$. The last two CHSH expressions are bounded by $3/2 - P_0$. For instance, the third expression
\ben 
&& \frac{1}{4} \sum_{a,b,x,y} p(a\oplus b = (x\oplus1)\cdot y|x,y,\I) \nonumber \\
&=& \frac{1}{4} \big[ 2 + 2( p(0,0|0,0,\I) + p(1,1|0,0,\I) + p(0,0|1,0,\I)  + p(1,1|1,0,\I) ) \big] - P_0  \nonumber \\
&\leqslant & \frac{3}{2} - P_0  ,
\een 
where we use the fact that $p(0,0|0,0,\I) + p(1,1|0,0,\I)\leqslant 1,$ $p(0,0|1,0,\I)  + p(1,1|1,0,\I) \leqslant 1$. 
Thus, for any quantum strategy, 
\be \label{P1bmax}
P_1 \leqslant \max\{P_0,1-P_0,3/2 - P_0\} = \max\{P_0,3/2 - P_0\}.\ee 
  On the other hand, we notice that each of the four expressions appearing on the right-hand side of \eqref{4chsh} is nothing but the CHSH expression with different re-labeling of the measurement settings $x,y$. Therefore, each of these four expressions is bounded by Tsirelson's bound $1/2(1+1/\sqrt{2})$. This fact, together with \eqref{P1bmax}, we conclude that \eqref{bP1m} holds.    
\end{proof}

\section{Brief discussions on interpretations of quantum theory}\label{app:B}
%\add 
%Let us first define two versions of universal quantum theory that are operationally consistent, by imposing  AoM def-\eqref{AOM} and NoM def-\eqref{AOE}.

%\begin{defn}[UQT with AoM]
%Universal quantum theory with the additional condition that every measurement is absolute according to the definition \eqref{AOM}.
%\end{defn} 
%\begin{defn}[UQT with NoM]
%Universal quantum theory with the additional condition that every measurement is not absolute according to the definition \eqref{AOE}.
%\end{defn}
%\blk 

Here, we briefly discuss different interpretations of universal quantum theory and classify whether they reconcile with `UQT with AoM' or `UQT with NoM'. The overall picture is summarized in Table \ref{table}. \\

\textit{Measurements are inherently classical.} Interpretations of QT like collapse theories \cite{collapse1, collapse2, collapse3, collapse4, collapse5, collapse6}, the ETH approach \cite{ETH} and CSM-ontology \cite{CSM}, to name a few, fall under the category of interpretations in which all physical systems undergo a non-reversible dynamics subject to any measurement process. As a result, in such theories, macroscopic objects, like, measurement devices, follow a different description than microscopic systems. These theories impose that the act of measurement physically changes the state of the system. 
%ETH approach argues that once the measurements are performed systems behave classically. CSM ontology imposes that measurement devices cannot be treated as quantum systems. 
As a consequence, they fall under the category of UQT with AoM, and do not show any violation of \eqref{Tleq1/2}, \eqref{Tqb}, or \eqref{bcP}. \\

\textit{Realist interpretations.} A realist model of UQT assumes the existence of some physical state that completely describes the objective reality of the corresponding physical system and predicts the outcome of all experiments. There are majorly two important classes of realist models. One is where the quantum state is a part of the physical reality, which is commonly referred to as $\psi$-ontic models \cite{Bell66, psi-ontic1, Bohm1}. The other is where the quantum state represents only information about some underlying physical reality, commonly referred to as $\psi$-epistemic models \cite{psi-epis1, psi-epis2, psi-epis3, psi-epis4, psi-epis5}. Any realist description of QT must satisfy some very interesting no-go theorems \cite{Bell, KS, PNC, BOD, UC}. Moreover, there are absolute no-go theorems \cite{LM,Leiferprl,CB,BOD} that constrain $\psi$-epistemic models to a certain extent. 
%. For $\psi$-epismetic models, they can fall either under universal QT with either AoM def-\eqref{AOM} or NoM def-\eqref{AOE} depending on different interpretations.Realist models applicable to whole universe like Bohmian mechanics  fall under the category of universal QT with NoM def-\eqref{AOE}.

Interestingly, the original Wigner's Friend thought experiment is in itself a simple no-go theorem against all $\psi$-ontic models applicable to subsystems if the universality of QT holds. In contrast, there exist two kinds of $\psi$-ontic explanations of quantum phenomena that are compatible with the notion of `UQT with NoM'. They are Bohmian mechanics and many-worlds or relative state formalism. According to Bohmian mechanics \cite{Bohm1, Bohm2, Bohm3, Bohm4, Bohm5}, there is a unique description of the whole universe (which the thought-experiment is a part of) that is compatible with the perspective of Wigner. In many-worlds interpretation and relative-state formalism of QT \cite{many-worlds1, many-worlds2, many-worlds3, many-worlds4, many-worlds5}, measurements are not absolute facts in the sense that they do not yield a single outcome. According to these two interpretations, the whole universe evolves via unique reversible dynamics and thus, these two are compatible with NoM.
%unless an observer describes himself/herself by the theory. 
As a consequence, we would observe violations of \eqref{Tleq1/2}, \eqref{Tqb}, and \eqref{bcP}. \\

\textit{Copenhagenish type interpretations.} Now, we focus on copehangenish type interpretations of QT. For our analysis, we refer to the lectures by Leifer \cite{leifer}. All such interpretations of QT are based on the following principles. First, the universality of quantum predictions imposes that every system is describable using QT. Secondly, the anti-realism of physical theory is that there is no objective description of reality. And thirdly, every measurement performed by an observer yields a single outcome to that observer. These principles imply the existence of a split known as the Heisenberg cut that separates the observer from the physical system which is being described by the observer. The placement of the Heisenberg cut is quite significant as it represents the boundary up to which an observer can describe physical systems using QT and must invoke classical theory after this split. 
%However, all such interpretations require that there is any fundamental place where this split has to be placed. 
In the context of the WFS scenario, we note that if the split is placed before the boundary of the isolated Lab, that is, if the split is placed on the left side of the dotted line in Fig. \ref{fig}, then the description of the isolated Lab is unambiguous and compatible with AoM \eqref{AOM}. On the contrary, if the split is placed after the boundary of the isolated Lab, that is, if the split is placed anywhere between the dotted line and the respective observer (Wigner or Student) in Fig. \ref{fig}, then the tension between the two dynamics of UQT persists, and as a result, we will arrive at operational inconsistency. 

The Copenhagenish type interpretations of QT are broadly divided into two categories: Objective interpretations, which include the original Copenhagen interpretation by Bohr \cite{Bohr}, Quantum pragmatism \cite{Prag}, and information interpretation \cite{Infointer1, Infointer2}; and Perspectival interpretations, which include Qbism \cite{Qbism1, Qbism3}, Consistent histories \cite{ch}, and Relational quantum mechanics \cite{Rovelli}. The difference between these two types lies in the fact that observing a measurement outcome by an observer (say, Friend) is an objective fact in the former type, while it is considered a subjective fact for that observer in the latter type. In the Perspectival approach, the placement of the split is decided by the observer. Therefore, operational consistency does not hold; however, this approach does not necessitate operational consistency. Moreover, depending on where the split is placed by an observer, violations of \eqref{Tleq1/2}, \eqref{Tqb}, and \eqref{bcP} prevail. On the other hand, it is not fully clear whether the placement of the split depends on the observer or not in Objective Copenhagenish interpretations, and thus, there are ambiguities therein.

%%%%%%%%%%%%%

\begin{table}[h!]
    \centering
\begin{tabular}{ |c | c | c | c | c|}
\hline
 Interpretations & UQT with AoM & UQT with NoM & \begin{tabular}{@{}c@{}} OCO \end{tabular} & 
 \begin{tabular}{@{}c@{}}
    Violation of conditions \\
 \eqref{Tleq1/2}, \eqref{Tqb},
    \eqref{bcP}
    \end{tabular} \\
\hline 
\hline
\begin{tabular}{@{}c@{}}
     Spontaneous collapse theories \cite{collapse1, collapse2, collapse3, collapse4, collapse5, collapse6},  \\
      ETH approach \cite{ETH}, CSM \cite{CSM}
\end{tabular}
  & $\checkmark$ &  \text{\sffamily X}  & $\checkmark$ & \text{\sffamily X}\\
 \hline 
 \begin{tabular}{@{}c@{}}
  Bohmian mechanics \cite{Bohm1, Bohm2, Bohm3, Bohm4, Bohm5},   \\ 
  Many-world, Relative-state formalism \cite{many-worlds1, many-worlds2, many-worlds3, many-worlds4, many-worlds5}  
 \end{tabular}
 & \text{\sffamily X} & $\checkmark$ & $\checkmark$ & $\checkmark$\\
\hline
Perspectival Copenhagenish \cite{Qbism1, Qbism3,ch, Rovelli} & \text{\sffamily X} & \text{\sffamily X} & \text{\sffamily X} & $\checkmark$ \\

  \hline 
Objective Copenhagenish \cite{Prag, Infointer1, Infointer2} & \text{\sffamily X} &  ? & ? & ?\\
  \hline
\end{tabular}
\caption{Classification of interpretations of quantum theory based on - $(i)$ whether they are compatible with absoluteness of measurement \eqref{AOM} or non-absoluteness of measurement \eqref{AOE}, $(ii)$ whether they are OCO according to def-\eqref{def:oc}, and $(iii)$ whether they provide violations of \eqref{Tleq1/2}, \eqref{Tqb}, and \eqref{bcP}. For the Objective Copenhagenish type of interpretation, there are ambiguities with respect to such classification.} 
\label{table}
\end{table}

%%%%%%%%

\end{document}